\newcommand{\nat}{\mathbb{N}}
\newcommand*{\eg}{\textit{e.g.}\xspace}
\newcommand*{\cf}{\textit{cf.}\xspace}
\newcommand*{\ie}{\textit{i.e.}\xspace}
\newcommand*{\wrt}{with respect to\xspace}
\newtheorem{theorem}{Theorem}[section]
\newtheorem{corollary}[theorem]{Corollary}
\newtheorem{lemma}[theorem]{Lemma}
\newtheorem{proposition}[theorem]{Proposition}
\newtheorem{example}[theorem]{Example}
\newtheorem{definition}[theorem]{Definition}
\newcommand{\opnsf}[1]{\mathop{\mathsf{#1}}}
\newcommand{\fourdots}{\!\mathord{::}\!}
\newcommand{\grmeq}{\ensuremath{\ \mathord{\fourdots=}\ }}
\newcommand{\grmor}{~\ensuremath{\mid}~}
\newcommand{\eqabv}{\ensuremath{\stackrel{\scriptsize{\operatorname{abv}}}{\Relbar}}}
\newcommand{\eqdef}{\ensuremath{\stackrel{\scriptsize{\operatorname{def}}}{\Relbar}}}
\newcommand{\set}[1]{\ensuremath{\{#1\}}}
\newcommand{\ldsq}{[\![}                       
\newcommand{\rdsq}{]\!]}                       
\newcommand{\map}[1]{\ldsq{#1}\rdsq}           
\newcommand*{\alf}{\ensuremath{\mathit{Alf}}\xspace}
\newcommand*{\alfp}{\ensuremath{\alf_P}\xspace}
\newcommand*{\fp}{\ensuremath{F_P}\xspace}
\newcommand*{\fintv}[1]{\ensuremath{\map#1_{V}}\xspace}
\newcommand*{\fintf}{\ensuremath{\fintv\varphi}\xspace}
\newcommand*{\smb}[1]{\ensuremath{\opnsf{SMB}({#1})\xspace}}
\begin{document}

\title{A Simple Functional Presentation\\
 and an Inductive Correctness Proof\\
 of the Horn Algorithm}

\author{Ant\'onio Ravara
  \institute{NOVA-LINCS and Dep. de Inform\'atica, FCT.\\
    Universidade NOVA de Lisboa, Portugal}
}

\def\titlerunning{A Functional Presentation and a Correctness Proof of the Horn Algorithm}
\def\authorrunning{A.~Ravara}

\maketitle

\thispagestyle{empty}

\begin{abstract}
  We present a recursive formulation of the Horn algorithm for
  deciding the satisfiability of propositional clauses. The usual
  presentations in imperative pseudo-code are informal and not
  suitable for simple proofs of its main properties.  By defining the
  algorithm as a recursive function (computing a least fixed-point),
  we achieve: 1) a concise, yet rigorous, formalisation; 2) a clear
  form of visualising executions of the algorithm, step-by-step; 3)
  precise results, simple to state and with clean inductive proofs.
\end{abstract}

\section{Motivation}
The Horn algorithm~\cite{Horn:sentences-JSL} is a particularly
efficient decision procedure for the satisfiability problem of
propositional logic. Although Horn Clause Logic is computationally
complete, the satisfiability problem for the conjunction of Horn
clauses is \textbf P-complete and nevertheless provable in linear time
(there is an algorithm that takes at most $n$ steps to determine if
the conjunction of Horn clauses is
satisfiable)~\cite{SCookPNguyen:logicalFoundationsProofComplex,WDowlingJGallier:linearTestingSatHorn}.
Note that the general Boolean satisfiability problem (for arbitrary
propositional formulae) is \textbf{NP}-complete.

Textbooks on (Mathematical or Computational) Logic usually present
imperative formulations of this algorithm, with rather informal proof
sketches
\cite{MHedman:firstCourseLogic,MHuthMRyan:lics}.
To present a correctness proof in full detail, one would need to
follow, for instance, the Hoare style,
defining the syntax of the programming and of an assertion languages,
the operational semantics, the proof system (at least discussing its
correctness), and then present the axiomatic proof. The setting is a
bit demanding and requires some auxiliar ``machinery''.

We believe a formulation of the algorithm as a recursive function
allows for not only a simple and easily readable definition, but
mainly, allows for a simple (inductive) proof, which in turn sheds
light on the algorithm itself, leading to several possible
improvements.

We present herein such a formulation together with examples of
execution, a correctness proof, and some further results useful for
optimisations of the algorithm.


\section{The Horn algorithm}

\paragraph{Motivation.}
If a propositional formula $\varphi$ is in Conjunctive Normal Form (or
$\opnsf{CNF}(\varphi )$, according to Definition~\ref{def:cnf}), then
checking that $\varphi$ is valid is straightforward: it has polynomial
complexity (\wrt\ the number of propositional symbols occurring in the
formula). The Horn algorithm is a simple and fast solution (polynomial
as well) to determine if a formula is satisfiable or contradictory.
However, the algorithm works only for a certain class of formulae ---
the \textit{Horn Clauses}. 

\paragraph{Syntax.}
Let $P$ be a countable set of propositional symbols, ranged over by
$p,q,\ldots$, and consider $\alfp = P \cup \set{\bot,\vee,\wedge,\to,(,)}$
a propositional alphabet over $P$. The set $\fp$ of propositional
formul\ae\ is the least one including the symbols in $P$, the symbol
$\bot$, and closed for the operators $\vee,\wedge,\to$
(\cf~Definitions~\ref{def:propL} and \ref{def:abb}).

\subsection{Horn Clauses}

Recall that a literal is an atomic formula (and then we call it a
\emph{positive} literal) or its negation (\cf
Definition~\ref{def:lit}).

\begin{definition}
  A \emph{basic Horn clause} is a disjunction of literals where at
  most one occurs positively.
\end{definition}

Formul\ae\ like $\bot$, $p$, $p \vee \neg q$, and $\neg p \vee \neg q$
are basic Horn clauses, whereas $p \vee q$ or $\bot \vee p$ are not.

\paragraph{Horn formul\ae.}
Note that a basic Horn clause is in one of the following three cases:
(1) does not have positive literals; (2) does not have negative
literals (and so it is a single positive literal); (3) it has negative
literals and one positive.
Therefore, any basic Horn clause may be presented as an
implication. Let '$\equiv$' stand for logical equivalence (\cf
Definition~\ref{def:equiv}).

\begin{lemma}\label{lem:HornClauses}
  Let $L$ and $L_i$ (for all considered $i$) be positive literals.
  \begin{enumerate}
  \item $L \equiv \top \rightarrow L$
  \item $\bigvee_{i=1}^n \neg L_i \equiv (\bigwedge_{i=1}^n L_i)
    \rightarrow \bot$
  \item $\bigvee_{i=1}^n \neg L_i \vee L \equiv (\bigwedge_{i=1}^n L_i)
    \rightarrow L$
  \end{enumerate}
\end{lemma}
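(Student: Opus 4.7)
The plan is to reduce each of the three equivalences to a short rewriting chain built from four standard propositional identities: the definition of implication $A \to B \equiv \neg A \vee B$, the constant laws $\neg\top \equiv \bot$ and $\bot \vee A \equiv A$, and the generalised De~Morgan law $\neg\bigwedge_{i=1}^n A_i \equiv \bigvee_{i=1}^n \neg A_i$. Since $\equiv$ denotes semantic equivalence (Definition~\ref{def:equiv}), each of these auxiliary facts is justified by a truth-table argument on the two-valued Boolean algebra, so the overall proof will be a bookkeeping exercise rather than a semantic one.

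For item~(1), I would expand the right-hand side directly: $\top \to L \equiv \neg\top \vee L \equiv \bot \vee L \equiv L$, applying in turn the definition of implication, $\neg\top \equiv \bot$, and the absorption $\bot \vee A \equiv A$.

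For items~(2) and~(3), the strategy is again to start from the right-hand side and unfold the implication. For (2), the chain is $(\bigwedge_{i=1}^n L_i) \to \bot \equiv \neg(\bigwedge_{i=1}^n L_i) \vee \bot \equiv \neg(\bigwedge_{i=1}^n L_i) \equiv \bigvee_{i=1}^n \neg L_i$, where the middle step absorbs the trailing $\bot$ and the last step applies generalised De~Morgan. For (3), the symmetric chain is $(\bigwedge_{i=1}^n L_i) \to L \equiv \neg(\bigwedge_{i=1}^n L_i) \vee L \equiv \bigl(\bigvee_{i=1}^n \neg L_i\bigr) \vee L$, using the same two rewrites but keeping the trailing $L$ rather than discarding it.

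The only step that is not an immediate rewriting is the generalised De~Morgan law, which I would prove by induction on $n$: the case $n=1$ is trivial, and the inductive step combines the binary equivalence $\neg(A \wedge B) \equiv \neg A \vee \neg B$ with the induction hypothesis applied to $\bigwedge_{i=1}^{n-1} L_i$. This is the main (mild) obstacle, but it is entirely routine and could equally well be invoked as folklore or stated as a preliminary lemma; no subtlety arises because the Horn clauses considered are by convention finite disjunctions, so $n$ is a concrete natural number on which induction is straightforward.
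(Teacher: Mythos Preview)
Your argument is correct. For item~(1) it coincides with the paper's chain (read in the opposite direction). For items~(2) and~(3) you take a slightly different route from the paper: you factor the whole argument through the generalised De~Morgan law $\neg\bigwedge_{i=1}^n L_i \equiv \bigvee_{i=1}^n \neg L_i$, prove that once by induction on~$n$, and then obtain both equivalences by a single rewrite of $\to$ as $\neg\cdot\vee\cdot$. The paper instead inducts directly on each of the two statements, using as inductive step the auxiliary law $(\varphi \to \gamma)\vee(\psi \to \gamma)\equiv(\varphi\wedge\psi)\to\gamma$ (whose own proof, incidentally, is just binary De~Morgan plus idempotence of $\vee$). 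Your organisation is a bit more economical, since it isolates the single place where induction is genuinely needed and reuses it for both clauses; the paper's version keeps items~(2) and~(3) self-contained at the cost of two parallel inductions. Either way the content is the same standard rewriting.
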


\begin{proof}
  In Appendix~\ref{app:hf}.
\end{proof}

We define now when is a propositional formula a Horn clause.

\begin{definition}
  A formula $\varphi \in \fp$ such that $\opnsf{CNF}$($\varphi$) is a
  \emph{Horn clause}, if it is the conjunction of basic Horn clauses.
\end{definition}
Let $E_P$ denote the set of propositional formul\ae\ obtained by
considering negation a primitive operator.
%
%
\begin{proposition}\label{prop:HornForm}
  Let $\varphi \in E_P$ be a Horn clause; then, $\varphi \equiv
  \bigwedge_{i=1}^n(C_i \rightarrow L_i)$, for some $n \geq 1$, where,
  for any $i \in \set{1,\ldots,n}$, each $L_i$ is a positive literal,
  each $C_i = \top$ or $C_i = \bigwedge_{j=1}^{k_i} L_{i,j}$, with
  $k_i \geq 1$, and each $L_{i,j}$ is a positive literal.
\end{proposition}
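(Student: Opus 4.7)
The plan is to reduce the proposition to a conjunction-by-conjunction application of Lemma~\ref{lem:HornClauses}. First, I would unfold the definition of Horn clause: since $\varphi$ is one, $\opnsf{CNF}(\varphi)$ is a conjunction of basic Horn clauses, and one has $\varphi \equiv \opnsf{CNF}(\varphi)$ by the standard soundness of the CNF transformation (a background equivalence presumably established in the preliminaries). Thus I may fix an $n \geq 1$ and write $\varphi \equiv \bigwedge_{i=1}^n D_i$, where each $D_i$ is a basic Horn clause, i.e.\ a disjunction of literals with at most one positive occurrence.

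Next, for each $i$ I would perform a case analysis on $D_i$, exactly matching the trichotomy described informally just before Lemma~\ref{lem:HornClauses}: either (a) $D_i$ consists of a single positive literal $L_i$, or (b) $D_i$ is a disjunction of only negative literals $\bigvee_{j=1}^{k_i}\neg L_{i,j}$, or (c) $D_i$ is $\bigvee_{j=1}^{k_i}\neg L_{i,j} \vee L_i$ with $L_i$ positive. Applying the three clauses of Lemma~\ref{lem:HornClauses} respectively yields $D_i \equiv \top \to L_i$ in case (a) (setting $C_i = \top$), $D_i \equiv (\bigwedge_{j=1}^{k_i} L_{i,j}) \to \bot$ in case (b) (setting $C_i = \bigwedge_{j=1}^{k_i} L_{i,j}$ and $L_i = \bot$), and $D_i \equiv (\bigwedge_{j=1}^{k_i} L_{i,j}) \to L_i$ in case (c). In every case $D_i$ is equivalent to an implication $C_i \to L_i$ of the required shape.

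To conclude, I would invoke congruence of $\equiv$ with respect to conjunction to lift the pointwise equivalences to $\bigwedge_{i=1}^n D_i \equiv \bigwedge_{i=1}^n (C_i \to L_i)$, and then compose with $\varphi \equiv \bigwedge_{i=1}^n D_i$ by transitivity of $\equiv$. The main obstacle, if any, is a notational one: case (b) produces $L_i = \bot$, so the statement is only consistent if $\bot$ is admitted as a positive literal (as is customary in Horn logic, where $\bot$ plays the role of a "goal" head). Provided the paper's definition of positive literal includes $\bot$ (or one reads $L_i \in P \cup \{\bot\}$), the three cases fit uniformly into the schema of the proposition; otherwise, a small adjustment of the statement is needed. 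Beyond this point, the argument is a mechanical, case-driven instantiation and requires no inductive argument on the structure of~$\varphi$ itself, since all the structural work has already been done in establishing the CNF equivalence.
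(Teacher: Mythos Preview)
Your proposal is correct and is exactly the approach the paper takes: its entire proof is the one-liner ``Use the previous lemma to transform each basic clause in an implication,'' which you have simply unpacked. Two minor clarifications: in this paper $\opnsf{CNF}(\cdot)$ is a \emph{predicate}, not a transformation, so a Horn clause $\varphi$ is by definition already (syntactically) a conjunction of basic Horn clauses and no appeal to soundness of a CNF conversion is needed; and your worry about case~(b) dissolves since, by Definition~\ref{def:lit}, $\bot$ is an atomic formula and hence a positive literal.
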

\begin{proof}
  Use the previous lemma to transform each basic clause in an implication.
\end{proof}
Henceforth, we call \emph{Horn formula} to a Horn clause $\varphi \in E_P$
such that $$\varphi = \bigwedge_{i=1}^n(C_i \rightarrow L_i)$$

\subsection{A functional presentation of the algorithm}
The main contribution of this note is the (non-deterministic, for simplicity)%
\footnote{A deterministic formulation is achieved easily, \eg~by inspecting the
  formula from left to right.} recursive formulation of
the Horn algorithm, together with the proof of correctness and the
optimisation lemmas.

\begin{definition}\label{def:HeA}
  Let $\varphi$ be a Horn formula. We define the function
  $\mathcal H:E_P \to \set{0,1}$ as
  $$\mathcal H(\varphi) = \left\{
  \begin{array}{ll}
    1, & \textrm{if } \bot \not\in \mathcal A(\varphi,\set{\top})\\
    0, & \textrm{otherwise}
  \end{array}
  \right.$$
  with $\mathcal A:E_P \times \wp(\set{\bot,\top} \cup P) \rightarrow
  \wp(\set{\bot,\top} \cup P)$ being the following function over Horn
  formul\ae.
  $$\mathcal A(\varphi,\mathcal C) = \top\
 \textrm{if}\ \varphi \equiv \top\,;\
 \textrm{otherwise}\
  \mathcal A(\varphi,\mathcal C) = \left\{
  \begin{array}{ll}
    \mathcal A(\varphi \setminus (C_i \rightarrow L_i),
    \mathcal C \cup \set{L_i}), &
    \textrm{if }\exists_{i \in \set{1,\ldots,n}}.\opnsf{set}(C_i) \subseteq \mathcal C\\
    \mathcal C, & \textrm{otherwise}
  \end{array}
  \right.$$
  where $\opnsf{set}(\top) = \set\top$ and
  $\opnsf{set}(\bigwedge_{i=1}^k L_i) = \set{L_i\ |\ i \in \set{1,\ldots,k}
    \textrm{ with }k \geq1}$; moreover,
  $\varphi \setminus \varphi \eqdef \top$ and
  $\varphi \setminus (C_i \rightarrow L_i) \eqdef
    (\bigwedge_{j=1}^{i-1}(C_j \rightarrow L_j)) \wedge
    (\bigwedge_{j=i+1}^n(C_j \rightarrow L_j))$, if $i > 1$.
\label{def:hornalgo}
\end{definition}
To illustrate how the algorithm works, we present some representative
examples. Let us first state the main property of the algorithm.
%
Recall that a formula is satisfiable if it  is satisfied by some
valuation and is contradictory if no valuation satisfies it (\cf
Definition~\ref{def:term} and subsequent lemmas).

\begin{theorem}
  For any Horn clause $\varphi \in E_P$:
  \begin{itemize}
  \item $\mathcal H(\varphi)=1$ if, and only if, $\varphi$ it is \emph{satisfiable};
  \item $\mathcal H(\varphi)=0$ if, and only if, $\varphi$ it is \emph{contradictory}.
  \end{itemize}
\end{theorem}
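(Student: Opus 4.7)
The plan is to analyse the set $\mathcal C^* \bdef \mathcal A(\varphi, \set\top)$ computed by the algorithm and prove the two biconditionals via a soundness and a completeness argument on this set. First I would observe that $\mathcal A$ terminates on every Horn formula, since each recursive call strictly decreases the number of conjuncts in its first argument; hence $\mathcal C^*$ is well defined. At termination $\mathcal A$ either returns the current accumulator (when all clauses have been consumed) or stops because the surviving formula is \emph{saturated} against $\mathcal C^*$, in the sense that no remaining implication has its antecedent contained in $\mathcal C^*$.

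For soundness --- that is, $\mathcal H(\varphi) = 0$ implies $\varphi$ contradictory (equivalently, $\varphi$ satisfiable implies $\mathcal H(\varphi) = 1$) --- I would prove by induction on the depth of the recursion of $\mathcal A$ that, for every valuation $v$ satisfying $\varphi$ and every $L \in \mathcal C^*$, one has $v(L) = 1$. The base case is immediate since $\mathcal C$ is initialised as $\set\top$. In the inductive step the only new literals added have the form $L_i$ where $\opnsf{set}(C_i) \subseteq \mathcal C$; by the induction hypothesis each $L_{i,j} \in \opnsf{set}(C_i)$ evaluates to true under $v$, so $C_i$ evaluates to true, and since $v$ satisfies the conjunct $C_i \rightarrow L_i$, so does $L_i$. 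Because no valuation makes $\bot$ true, the presence of $\bot$ in $\mathcal C^*$ forces $\varphi$ to have no model.

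For completeness --- that $\bot \not\in \mathcal C^*$ implies $\varphi$ satisfiable --- I would define the valuation $v^*$ by $v^*(p) = 1$ iff $p \in \mathcal C^*$, for each $p \in P$, and verify that $v^*$ satisfies every conjunct $C_i \rightarrow L_i$ of $\varphi$. Split by cases on whether $\opnsf{set}(C_i) \subseteq \mathcal C^*$: if not, some $L_{i,j}$ is false under $v^*$, hence $v^*$ falsifies $C_i$ and the implication holds vacuously; if yes, then the saturation property forces the clause to have been consumed during the recursion, so $L_i$ was added to $\mathcal C$ at some step and therefore belongs to $\mathcal C^*$, giving $v^*(L_i) = 1$ (the hypothesis $\bot \not\in \mathcal C^*$ rules out the degenerate case $L_i = \bot$).

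Combining the two directions, and using that \emph{satisfiable} and \emph{contradictory} are exhaustive and mutually exclusive categories for $\varphi \in \ep$, yields both biconditionals. The delicate step I anticipate is formalising the saturation invariant used in completeness: namely, that any conjunct $C_i \rightarrow L_i$ whose antecedent is eventually subsumed by $\mathcal C^*$ must have been consumed before termination, for otherwise a firing step would still be available and $\mathcal A$ would have recursed further. This is a routine induction on the recursion of $\mathcal A$, but it is the one place where the argument genuinely exploits the operational behaviour of the algorithm rather than just its output $\mathcal C^*$.
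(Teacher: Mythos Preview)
Your proposal is correct and follows the same overall decomposition as the paper: establish the two nontrivial implications ($\bot\in\mathcal C^*$ implies contradictory; $\bot\notin\mathcal C^*$ implies satisfiable) and derive the biconditionals from the fact that \emph{satisfiable} and \emph{contradictory} are exhaustive and mutually exclusive. Your construction of the canonical model $v^*$ from $\mathcal C^*$ is exactly the paper's Proposition~\ref{prop:leastM}, and your saturation invariant is the operational fact the paper uses (somewhat implicitly) in that proof.

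The one genuine difference is in the direction ``$\bot\in\mathcal C^*$ implies contradictory''. You argue semantically: by induction on the recursion depth, every model of $\varphi$ must make every literal in $\mathcal C^*$ true, so $\bot\in\mathcal C^*$ rules out models. The paper instead argues syntactically (Lemma~\ref{lem:topbot}): from $\bot\in\mathcal C^*$ it extracts a clause $C_i\to\bot$ that fired, and then chains the implications consumed by the algorithm via the transitivity law $(\varphi\to\psi)\wedge(\psi\to\gamma)\models\varphi\to\gamma$ to conclude $\set\varphi\models\top\to\bot$. Your route is arguably more direct and yields the useful invariant ``every model satisfies $\mathcal C^*$'' as a byproduct; the paper's route stays closer to propositional-logic manipulations and makes the chain of fired implications explicit. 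Either approach is fine and neither requires extra machinery beyond what you have outlined.
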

\begin{proof}
  A consequence of Theorem~\ref{thm:sound} (presented ahead).
\end{proof}

\begin{example}
  Let us determine the nature of the following Horn clause.
  $$\varphi \eqdef p \wedge (\neg r \vee s) \wedge
    (r \vee \neg p \vee \neg q) \wedge (\neg r \vee \neg s) \wedge q\
  $$
  Notice that $\varphi$ is a $\opnsf{CNF}$, but (according to
  Lemma~\ref{lem:disjLit}) it is not valid. We convert it to a Horn
  formula using Lemma~\ref{lem:HornClauses}.
  $$\varphi \equiv \psi \eqdef (\top \rightarrow p) \wedge (r \rightarrow s)
    \wedge ((p \wedge q) \rightarrow r) \wedge ((r \wedge s)
    \rightarrow \bot) \wedge (\top \rightarrow q)$$
  Considering
  \begin{align*}
    \psi_1 & = (r \rightarrow s) \wedge ((p \wedge q) \rightarrow
    r) \wedge ((r \wedge s)
    \rightarrow \bot) \wedge (\top \rightarrow q)\\
    \psi_2 & = (r \rightarrow s) \wedge ((p \wedge q) \rightarrow
    r) \wedge ((r \wedge s) \rightarrow \bot)
  \end{align*}
  we calculate the function $\mathcal A$.
  $$\begin{array}{rcl}
    \mathcal A(\psi,\set{\top}) & = & \mathcal A(\psi_1,\set{\top,p})\\
     & = & \mathcal A(\psi_2,\set{\top,p,q})\\
     & = & \mathcal A((r \rightarrow s)\wedge ((r \wedge s)
    \rightarrow \bot),\set{\top,p,q,r})\\
     & = & \mathcal A((r \wedge s)
    \rightarrow \bot,\set{\top,p,q,r,s})\\
     & = & \mathcal A(\top,\set{\top,p,q,r,s,\bot})\\
     & = & \set{\top,p,q,r,s,\bot}
  \end{array}
  $$
  Since $\bot \in \set{\top,p,q,r,s,\bot}$, then $\mathcal
  H(\psi)=0$; therefore $\psi$ is contradictory, and since $\varphi
  \equiv \psi$, so is $\varphi$.
\end{example}

\begin{example}
  Let us now determine the nature of the following Horn clause.
  $$\varphi \eqdef p \wedge (\neg r \vee s) \wedge (r
    \vee \neg p \vee \neg q) \wedge (\neg r \vee \neg s)\ 
  $$
  Notice that $\varphi$ is a $\opnsf{CNF}$, but (according to
  Lemma~\ref{lem:disjLit}) it is not valid. We convert it to a Horn formula
  $$\varphi \equiv \psi \eqdef (\top \rightarrow p) \wedge (r \rightarrow s)
    \wedge ((p \wedge q) \rightarrow r) \wedge ((r \wedge s)
    \rightarrow \bot)
  $$
  and considering
  $$\psi_1 = (r \rightarrow s)
    \wedge ((p \wedge q) \rightarrow r) \wedge ((r \wedge s)
    \rightarrow \bot))
  $$
  we calculate the function $\mathcal A$.
  $$\begin{array}{rcl}
    \mathcal A(\psi,\set{\top}) & = \\
    \mathcal A(\psi_1,\set{\top,p})  & = \\
    \set{\top,p}
  \end{array}
  $$
  Since $\bot \notin \set{\top,p}$, then $\mathcal H(\psi)=1$;
  therefore $\psi$ is satisfiable, and since $\varphi \equiv \psi$, so is $\varphi$.\\

  Indeed, considering $V$ where $V(p)=1$ and
  $V(q)=V(r)=V(s)=0$, one easily verifies that $V $ satisfies $\varphi$.%
\footnote{A property capturing this fact is stated as Proposition \ref{prop:leastM}.}
\end{example}

\begin{example}
  Let us finally determine the nature of the Horn clause
  $p \wedge (\neg r \vee s) \wedge (r \vee \neg p) \wedge \neg r$.
  Notice that it is a not valid $\opnsf{CNF}$ (according to
  Lemma~\ref{lem:disjLit}); we convert it to a Horn
  formula and considering
  \begin{align*}
    \varphi & = (\top \rightarrow p) \wedge (r \rightarrow s) \wedge
    (p \rightarrow r) \wedge (r \rightarrow \bot)\\
    \varphi_1 & = (r \rightarrow s) \wedge (p \rightarrow r) \wedge (r
    \rightarrow \bot)\\
    \varphi_2 & = (r \rightarrow s) \wedge (r \rightarrow \bot)
  \end{align*}
  we calculate the function $\mathcal A$, taking advantage of its
  monotonicity (\cf~Lemma~\ref{lem:incr}).
  $$\begin{array}{rcl}
    \mathcal A(\varphi,\set{\top}) & = \\
    \mathcal A(\varphi_1,\set{\top,p})  & = \\
    \mathcal A(\varphi_2,\set{\top,p,r})  & \supseteq \\
    \set{\top,p,r,\bot}
  \end{array}
  $$
  Since $\bot \in \mathcal A(\varphi,\set{\top})$, then $\mathcal
  H(\varphi)=0$; therefore $\varphi$ is contradictory; since it is
  equivalent to the original formula, that one is also contradictory.
\end{example}


\section{Results}

We state herein several relevant properties of the algorithm, namely
its characterisation as a least fixed-point and its
correctness. Proofs are in the appendices. 

\subsection{Fixed-points}

Considering $\mathcal L$ to be the set of all literals, the set
$\wp(\mathcal L)$ is a complete lattice with respect to set
inclusion. Since the function $\mathcal A$ is monotone (result stated
below), by the Knaster-Tarski Theorem~\cite{Tarksi:fixed-points}, the
function $\mathcal A$ has (unique) maximal and minimal fixed
points. In fact, when applied to the set $\set\top$, the algorithm
calculates a least fixed-point of $\mathcal A$ (the proof is in Appendix~\ref{app:lfp}).

\begin{lemma}\label{lem:lfp}
  Let $\varphi = \bigwedge_{i=1}^n(C_i \rightarrow L_i)$ be a Horn formula.
  The function $\mathcal A$ is:
  \begin{enumerate}
  \item\label{lem:incr} increasing:
    $\mathcal C \subseteq \mathcal A(\varphi,\mathcal C) \subseteq
      \mathcal C \cup \bigcup_{i=1}^n \set{L_i}$;
  \item\label{lem:monot} and monotone: if $\mathcal C \subseteq \mathcal D$ then
    $\mathcal A(\varphi,\mathcal C) \subseteq \mathcal A(\varphi,\mathcal D)$.
  \end{enumerate}
\end{lemma}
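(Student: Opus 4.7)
The plan is to prove both parts by induction on $n$, the number of conjuncts of $\varphi$, establishing Part~1 first and then invoking it in the proof of Part~2.

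For Part~1 (increasing), the base case takes $\varphi \equiv \top$, where the first branch of Definition~\ref{def:HeA} gives $\mathcal A(\varphi,\mathcal C) = \mathcal C$ and both inclusions are immediate. For the step, I would split on the guard of the second branch: if no $i$ satisfies $\opnsf{set}(C_i) \subseteq \mathcal C$ then $\mathcal A(\varphi,\mathcal C) = \mathcal C$ and there is nothing to prove; otherwise $\mathcal A(\varphi,\mathcal C) = \mathcal A(\varphi', \mathcal C \cup \set{L_i})$ for $\varphi' \eqdef \varphi \setminus (C_i \rightarrow L_i)$, which has $n-1$ conjuncts. Applying the induction hypothesis to $\varphi'$ and $\mathcal C \cup \set{L_i}$ yields
$$\mathcal C \cup \set{L_i} \;\subseteq\; \mathcal A(\varphi', \mathcal C \cup \set{L_i}) \;\subseteq\; \mathcal C \cup \set{L_i} \cup \bigcup_{j \neq i}\set{L_j},$$
from which both of the desired inclusions for $\varphi$ follow at once.

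For Part~2 (monotone), assume $\mathcal C \subseteq \mathcal D$ and induct on $n$ again. If the computation from $\mathcal C$ falls into the default branch (no applicable clause), then $\mathcal A(\varphi,\mathcal C) = \mathcal C \subseteq \mathcal D \subseteq \mathcal A(\varphi,\mathcal D)$, the last step invoking Part~1. Otherwise the run from $\mathcal C$ selects some $i$ with $\opnsf{set}(C_i) \subseteq \mathcal C$; since $\mathcal C \subseteq \mathcal D$, the very same clause is available from $\mathcal D$, so by making that choice on the right we reduce both computations to recursions on $\varphi'$ with starting sets $\mathcal C \cup \set{L_i} \subseteq \mathcal D \cup \set{L_i}$, to which the induction hypothesis applies, giving the required containment.

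The main obstacle is the non-determinism acknowledged in the definition: a priori the recursions from $\mathcal C$ and from $\mathcal D$ may pick different clauses $i$ at each step, so Part~2 implicitly rests on the fact that all runs from the same input yield the same set. The cleanest way to discharge this is to prove confluence as a separate lemma (or to appeal to the least fixed-point characterisation anticipated in the Fixed-points section); alternatively, one reads Part~2 as ``for every execution from $\mathcal C$ producing $X$ there is an execution from $\mathcal D$ producing some $Y \supseteq X$,'' which is exactly what the induction above establishes and, combined with confluence, delivers the stated monotonicity.
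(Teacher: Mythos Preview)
Your argument is essentially the same as the paper's: induction on the number of clauses, with the recursive branch reducing to the induction hypothesis on $\varphi\setminus(C_i\to L_i)$ and the enlarged set $\mathcal C\cup\{L_i\}$. The only structural differences are cosmetic: the paper takes $n=1$ rather than $\varphi\equiv\top$ as the base case, and it handles the two items in a single combined induction rather than proving Part~1 first and invoking it inside Part~2.

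Where the two diverge is in their treatment of the non-determinism, and here you are more honest than the paper. The paper simply writes ``assume, without loss of generality, that $\opnsf{set}(C)\subseteq\mathcal C$'' and then silently assumes the computation from $\mathcal D$ fires the \emph{same} clause; its concluding chain for monotonicity even appeals to an inclusion $\mathcal A(\varphi,\mathcal D)\subseteq\mathcal A(\varphi,\mathcal D\cup\{L\})$ that is an instance of the very property being proved. You instead flag the issue explicitly and propose resolving it either via a separate confluence lemma (the paper supplies this later, as Theorem~\ref{thm:term}) or by reading the induction as producing, for each run from $\mathcal C$, a matching run from $\mathcal D$. Both fixes are sound; the paper's own proof leaves this gap implicit.
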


Notice that once an execution step of $\mathcal A$ adds a literal to
the result set, that literal is never taken out.
Therefore, once an execution step adds $\bot$ to the result set, the
procedure may stop as $\bot$ shall necessarily be in the final set.
Moreover, the least result set of the algorithm is the single set
$\set\top$, the literal $\top$ is in all result sets, and the greatest
one is composed by $\top$ and all the literals that appear in the
consequence of the implications constituting the input Horn formula.


\subsection{Auxiliary and optimization lemmas}
We present a couple of (straightforward) results that allow, in some
particular cases, for better performance of the algorithm. Notice that
if $\bot$ is not in the consequent of an implication of a Horn formula
$\varphi$, or if no antecedent is $\top$, then $\bot$ is not in
$\mathcal A(\varphi,\set\top)$.  Then, $\varphi$ is satisfiable (and
one does not even need to execute the algorithm).
The fact is a particular case of the following corollary of the
previous lemma (it is the contra-positive of Lemma~\ref{lem:lfp}.\ref{lem:incr}).

\begin{corollary}\label{cor:opt1}
  Let $\varphi = \bigwedge_{i=1}^n(C_i \rightarrow L_i)$ be a Horn formula.
  If $L\notin \bigcup_{i=1}^n L_i$ then $L \notin \mathcal A(\varphi,\set\top)$. 
\end{corollary}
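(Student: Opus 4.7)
The plan is to derive the statement as a direct contrapositive of the upper-bound half of Lemma~\ref{lem:lfp}.\ref{lem:incr}. Instantiating that lemma with $\mathcal C = \set\top$ gives
$$\mathcal A(\varphi,\set\top) \subseteq \set\top \cup \bigcup_{i=1}^n \set{L_i}.$$
First I would observe that the literal $L$ in the statement is implicitly taken to be distinct from $\top$, since $\top$ is always present in every computation of $\mathcal A$ regardless of the consequents of $\varphi$, and $\top$ is not counted among the $L_i$.

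Under that convention, assume $L \notin \bigcup_{i=1}^n \set{L_i}$. Then, since $L \neq \top$, also $L \notin \set\top \cup \bigcup_{i=1}^n \set{L_i}$. Combining this with the inclusion above immediately yields $L \notin \mathcal A(\varphi,\set\top)$, which is exactly the desired conclusion.

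There is essentially no obstacle: the whole argument is a one-line application of Lemma~\ref{lem:lfp}.\ref{lem:incr} followed by contraposition. The only point that deserves care is the implicit side-condition $L \neq \top$; once that is acknowledged, no induction, case analysis, or further reasoning about the recursive behaviour of $\mathcal A$ is needed, because the upper bound already captures the invariant that $\mathcal A$ never introduces literals outside the consequents of $\varphi$.
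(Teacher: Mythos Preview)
Your proposal is correct and matches the paper's own justification exactly: the text introduces the corollary precisely as ``the contra-positive of Lemma~\ref{lem:lfp}.\ref{lem:incr}'', which is what you do after instantiating $\mathcal C=\set\top$. Your explicit remark about the implicit side-condition $L\neq\top$ is a welcome clarification that the paper leaves tacit.
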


Furthermore,  if there are no ``unit clauses'' (of the form $\top
\to p$), the execution of the algorithm ends in one step, not
modifying the initial set. The lemma below, a simple consequence of
the definition of the algorithm, captures this fact.

\begin{lemma}\label{cor:opt}
  Let $\varphi = \bigwedge_{i=1}^n(C_i \rightarrow L_i)$ be a Horn formula.
  If $\forall i .  1 \leq i \leq n \wedge C_i \neq \top$, then $\mathcal
  A(\varphi,\set\top) = \set\top$ and the execution of $\mathcal A$
  takes exactly one step.
\end{lemma}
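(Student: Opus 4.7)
The plan is to read the conclusion directly off the definition of $\mathcal{A}$ given in Definition~\ref{def:HeA}. The key observation is that with initial set $\mathcal{C} = \set\top$, the only way $\mathcal{A}$ can make a recursive call (as opposed to returning $\mathcal{C}$ immediately) is if there is some clause $C_i \rightarrow L_i$ with $\opnsf{set}(C_i) \subseteq \set\top$. I would open by invoking Proposition~\ref{prop:HornForm}, which tells us each $C_i$ is either $\top$ or a conjunction $\bigwedge_{j=1}^{k_i} L_{i,j}$ of positive literals with $k_i \geq 1$.

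Next, I would argue by cases on each $C_i$. The hypothesis excludes $C_i = \top$, so necessarily $C_i = \bigwedge_{j=1}^{k_i} L_{i,j}$ with $k_i \geq 1$, and by definition $\opnsf{set}(C_i) = \set{L_{i,1}, \ldots, L_{i,k_i}}$, a nonempty set of positive literals over $P$. Since $\top \notin P$, we have $L_{i,j} \neq \top$ for all $j$, hence $\opnsf{set}(C_i) \not\subseteq \set\top$. As this holds for every $i \in \set{1,\ldots,n}$, the guard $\exists_{i \in \set{1,\ldots,n}}.\,\opnsf{set}(C_i) \subseteq \mathcal{C}$ in the definition of $\mathcal{A}$ fails at the very first call.

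Consequently, $\mathcal{A}(\varphi,\set\top)$ falls into the \emph{otherwise} branch and evaluates to $\set\top$ in a single step, with no recursive invocations. This establishes both claims simultaneously.

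I do not expect any real obstacle here: the proof is a direct unfolding of Definition~\ref{def:HeA} combined with the normal form from Proposition~\ref{prop:HornForm}. The only subtle point worth making explicit is that $\top \notin P$, so no positive literal appearing in an antecedent can coincide with $\top$; this is what makes the guard fail.
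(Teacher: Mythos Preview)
Your proposal is correct and follows essentially the same approach as the paper's proof: both argue directly from Definition~\ref{def:HeA} that the recursion guard $\exists i.\,\opnsf{set}(C_i)\subseteq\set\top$ fails, so the \emph{otherwise} branch returns $\set\top$ in one step. You simply spell out more detail (invoking Proposition~\ref{prop:HornForm} and the fact that $\top$ is not a positive literal) where the paper's one-line proof leaves this implicit; one very minor imprecision is your phrase ``positive literals over $P$'', since a positive literal may also be $\bot$, but this does not affect the conclusion $L_{i,j}\neq\top$.
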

\begin{proof}
  By definition of the function $\mathcal A$ (in
  Definition~\ref{def:HeA}), if $T \notin \bigcup_{1 \leq i \leq n}C_i$
  then $\mathcal A(\varphi,\set\top) = \set\top$, and $\mathcal A$ is
  calculated in exactly one step (applying the base case of its
  recursive definition).
\end{proof}

\subsection{Termination and complexity}
The algorithm always produces a unique result set for a given input,
\ie, it is a \emph{function}, and it always \emph{terminates};
moreover, it is linear in the size of the formula, with each recursive
step examining all the atomic symbols occurring in one of the clauses
(which is then removed from the formula).

\begin{theorem}\label{thm:term}
  For any Horn formula $\varphi \in E_P$ there is a unique set
  $\mathcal C$ of literals such that $\mathcal A(\varphi,\set\top) =
  \mathcal C$.  Furthermore, the procedure takes at most $n+1$ steps,
  where $n$ is the number of clauses of $\varphi$.
\end{theorem}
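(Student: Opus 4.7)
The plan is to prove the two claims—uniqueness of the result and the $n+1$ step bound—separately, both by induction on the number $n$ of implication conjuncts of $\varphi$.

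For termination together with the complexity bound, I would proceed by strong induction on $n$. If $\varphi \equiv \top$, the base case in Definition~\ref{def:HeA} returns the accumulator in a single step. Otherwise, either no index $i$ satisfies $\opnsf{set}(C_i) \subseteq \mathcal C$, in which case the definition returns $\mathcal C$ in one step; or some $i$ does, and the recursion is made on the pair $(\varphi \setminus (C_i \rightarrow L_i), \mathcal C \cup \set{L_i})$, where the first component has exactly $n-1$ conjuncts. By the \hi the recursive call takes at most $n$ steps, giving $n+1$ steps overall.

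For uniqueness—the substantive part, since $\mathcal A$ is explicitly non-deterministic in the choice of $i$—the cleanest approach is a confluence (diamond) lemma. I would show that whenever two distinct indices $i, j$ both satisfy $\opnsf{set}(C_i), \opnsf{set}(C_j) \subseteq \mathcal C$ at some state $(\varphi, \mathcal C)$, each remains applicable after processing the other (by Lemma~\ref{lem:lfp}.\ref{lem:incr}, the accumulator only grows), and both orders of the two one-step reductions reach the same state, namely the formula $\varphi$ with both implications removed and the accumulator extended by $\set{L_i, L_j}$. A straightforward induction on $n$ then promotes this local diamond to full confluence between any two schedules, so all terminating computations from $(\varphi, \set\top)$ yield a common result set $\mathcal C$. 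Alternatively, and more concisely, one can appeal to the least-fixed-point characterisation of $\mathcal A(\varphi,\set\top)$ remarked upon before Lemma~\ref{lem:lfp}, whence uniqueness is immediate from Knaster–Tarski.

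The main obstacle I expect is the uniqueness part. Termination and the step bound are essentially bookkeeping once one observes that the formula argument strictly decreases in size at every non-base recursive call. By contrast, for uniqueness one must either work out the confluence argument in detail—taking care that the definition of $\varphi \setminus (C_i \rightarrow L_i)$ behaves properly under two successive removals, so that re-indexing is order-independent—or make the least-fixed-point characterisation formal enough to invoke directly.
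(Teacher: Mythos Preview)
Your termination and step-bound argument is essentially what the paper does: an induction on the number $n$ of clauses, using that each non-trivial recursive call removes one conjunct. The paper packages both base cases (single clause with antecedent $\top$ or a proper conjunction) explicitly and counts two steps for $n=1$, but structurally your outline matches.

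Where you diverge is on uniqueness. The paper does \emph{not} isolate a confluence lemma; instead it folds uniqueness into the same induction and dispatches it informally, writing ``assume, without loss of generality, that one chooses $\psi$ such that \ldots'', effectively re-ordering the clauses so that a distinguished conjunct is processed last. That WLOG is precisely what your diamond lemma would justify, so in a sense you are making explicit what the paper leaves implicit. Your route (local confluence plus termination gives global confluence) is the standard, more robust argument for a non-deterministic procedure; the paper's route is shorter on the page but relies on the reader accepting the re-ordering as harmless. Your alternative via Knaster--Tarski is, as you note, not free: the paper's remark that $\mathcal A(\varphi,\set\top)$ computes a least fixed point is stated rather than proved, so invoking it here would be circular unless you first established that characterisation independently.
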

The proof of this result is in Appendix~\ref{app:term} (as the proof
of Theorem~\ref{appthm:term}).

\subsection{Correctness}
Notice first that the result of the algorithm determines a
\emph{unique least model}: if the formula is satisfiable, then one
gets a valuation satisfying it by assigning value 1 to the
propositional symbols occurring in the resulting set. The other symbols
occurring in the formula are set to 0. Let $\smb\varphi$ denote the
set of propositional symbols of a formula $\varphi$.

\begin{proposition}\label{prop:leastM}
  Consider a satisfiable Horn formula $\varphi \in E_P$ such that
  $\mathcal A(\varphi,\set\top) = \mathcal C$ and
  $\bot \notin\mathcal C$. Then, $V \Vdash \varphi$ considering $V$
  such that $V(p)=1$ for each $p \in \mathcal C$ and $V(q)=0$ for each
  $q \in (P \setminus \mathcal C)$.
\end{proposition}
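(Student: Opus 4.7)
The plan is to show that $V$ satisfies every conjunct $C_i \rightarrow L_i$ of $\varphi$ by splitting them into those that were \emph{fired} during the computation of $\mathcal{A}(\varphi,\set\top)$ (i.e.\ consumed by the recursive step) and those still present when the base case of $\mathcal{A}$ is reached. An implication is satisfied either when its consequent is true or when its antecedent is false, and these two situations will correspond exactly to the two groups.

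First I would establish, by induction on the recursive calls unrolling $\mathcal{A}(\varphi,\set\top) = \mathcal{C}$, the following invariant: at every intermediate call $\mathcal{A}(\psi,\mathcal{D})$, every clause of $\varphi$ \emph{missing} from $\psi$ has its consequent in $\mathcal{D}$. This is immediate at the root ($\psi = \varphi$, $\mathcal{D}=\set\top$, no clause missing) and is preserved by the recursive step, which simultaneously removes $C_i\rightarrow L_i$ from $\psi$ and inserts $L_i$ into $\mathcal{D}$; previously removed clauses stay handled because, by Lemma~\ref{lem:lfp}.\ref{lem:incr}, $\mathcal{D}$ only grows. Applied at termination, this gives: if a clause fired, then its consequent lies in $\mathcal{C}$. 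Dually, the base case of $\mathcal{A}$ is triggered only when no surviving clause has $\opnsf{set}(C_i)\subseteq\mathcal{C}$, so for every clause that did \emph{not} fire, some literal of $\opnsf{set}(C_i)$ is absent from $\mathcal{C}$.

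I would then conclude by case analysis on an arbitrary clause $C_i\rightarrow L_i$ of $\varphi$. If it fired, then $L_i\in\mathcal{C}$; the hypothesis $\bot\notin\mathcal{C}$ rules out $L_i=\bot$, so $L_i$ is a propositional symbol $p$ with $V(p)=1$, and the implication is true. If it did not fire, pick $M\in\opnsf{set}(C_i)\setminus\mathcal{C}$; since $\top\in\mathcal{C}$ (again by Lemma~\ref{lem:lfp}.\ref{lem:incr}), $M$ is some $q\in P\setminus\mathcal{C}$, whence $V(q)=0$ and $C_i$ evaluates to false under $V$, so the implication is vacuously satisfied. Hence $V\Vdash C_i\rightarrow L_i$ for every $i$, and thus $V\Vdash\varphi$.

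The main obstacle is phrasing the invariant cleanly in spite of the nondeterministic choice of clause fired at each step; once the invariant is in place, both the ``fired'' and the ``non-fired'' cases are a direct reading of the definition of $\mathcal{A}$ and of $V$.
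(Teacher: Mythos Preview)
Your argument is correct and is, in fact, tidier than the paper's own proof. The paper does not introduce an explicit invariant on the recursion; instead it decomposes $\varphi$ syntactically into the unit clauses $\top\to L_i$ and the remaining clauses $C_j\to L_j$, and then performs a nested case analysis on whether $L_j=\bot$ or $L_j\in P$ and on whether the literals of $C_j$ lie in $\mathcal C$, appealing each time to ``the definition of $\mathcal A$'' and to Lemma~\ref{lem:lfp}. The underlying dichotomy is the same as yours---either the consequent ends up in $\mathcal C$ or some antecedent literal does not---but the paper arrives at it through the shape of the clause rather than through the operational history of the run. Your invariant (``every removed clause has its consequent in the current set'') makes that dichotomy explicit once and for all, so the final case split is a two-liner; it also makes transparent why the nondeterminism in the choice of fired clause is harmless, a point the paper leaves implicit.

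One small wrinkle: in the non-fired case you conclude that the witness $M\in\opnsf{set}(C_i)\setminus\mathcal C$ must be a propositional symbol because $\top\in\mathcal C$. Formally the positive literals allowed in an antecedent also include $\bot$, so $M=\bot$ is not excluded by that reasoning alone (the paper explicitly lists ``some $L_{k,j}$ is $\bot$'' as a subcase). This does no damage to your conclusion---$\bot$ evaluates to $0$ under any $V$, so $C_i$ is still falsified---but you should phrase the sentence as ``$M$ is either $\bot$ or some $q\in P\setminus\mathcal C$, and in either case $\llbracket M\rrbracket_V=0$''.
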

The proof of this result is in Appendix~\ref{app:leastM} (as the proof
of Proposition~\ref{appprop:leastM}).

\medskip
We finally state the main result: the algorithm is sound and complete
for Horn formul\ae.
\begin{theorem}\label{thm:sound}
  For any Horn formula $\varphi \in E_P$:
  \begin{itemize}
  \item $\bot \notin \mathcal A(\varphi,\set\top)$, if, and only if,
    $\varphi$ it is satisfiable;
  \item $\bot \in \mathcal A(\varphi,\set\top)$, if, and only if,
    $\varphi$ it is contradictory.
  \end{itemize}
\end{theorem}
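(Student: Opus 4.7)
The plan is first to notice that the two bullet points are logically equivalent: by definition, a formula is contradictory exactly when no valuation satisfies it, so the second bullet is simply the contrapositive of the first with both sides negated. Hence it suffices to prove only the biconditional $\bot \notin \mathcal A(\varphi,\set\top) \iff \varphi \text{ is satisfiable}$.

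The easy direction, from $\bot \notin \mathcal A(\varphi,\set\top)$ to satisfiability of $\varphi$, is essentially already done: Proposition~\ref{prop:leastM} constructs an explicit valuation $V$ satisfying $\varphi$ directly from $\mathcal C = \mathcal A(\varphi,\set\top)$, by setting $V(p)=1$ iff $p \in \mathcal C$. I would simply invoke it.

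For the converse direction, I would argue by contraposition: assume $\bot \in \mathcal A(\varphi,\set\top)$, and show that no valuation satisfies $\varphi$. The plan is an induction on the number of recursive calls of $\mathcal A$ (finite by Theorem~\ref{thm:term}), producing a chain $\mathcal C_0 = \set\top \subseteq \mathcal C_1 \subseteq \cdots \subseteq \mathcal C$, with invariant: for every $V \Vdash \varphi$ and every literal $L$ in the current $\mathcal C_k$, $V(L)=1$. The base case $\mathcal C_0 = \set\top$ is immediate. The inductive step is the heart of the argument: when the algorithm extends $\mathcal C_k$ by $L_i$ using a clause $C_i \to L_i$ with $\opnsf{set}(C_i) \subseteq \mathcal C_k$, the induction hypothesis forces $V$ to map every literal of $C_i$ to $1$, and then $V \Vdash C_i \to L_i$ forces $V(L_i)=1$. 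Consequently, if $\bot$ appears in $\mathcal C$, any satisfying $V$ would need $V(\bot)=1$, which is impossible; hence $\varphi$ has no model and is contradictory.

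The main obstacle I foresee is purely bookkeeping: one must index the chain of recursive calls cleanly so the induction is well-formed, relying on Lemma~\ref{lem:lfp} for the monotonicity that ensures $\mathcal C_k \subseteq \mathcal C_{k+1}$. Beyond that, each inductive step reduces to a one-line semantic check using the meaning of $\Vdash$ for conjunctions and implications, so I do not expect genuine difficulty beyond precision in the set-up.
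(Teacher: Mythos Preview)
Your proposal is correct. Both you and the paper reduce the easy direction to Proposition~\ref{prop:leastM} and observe that the two bullet points are contrapositives of one another, so the substantive work is the implication $\bot \in \mathcal A(\varphi,\set\top) \Rightarrow \varphi$ contradictory. Here your route differs from the paper's. The paper proceeds syntactically: it isolates a clause $C_i \to \bot$ responsible for $\bot$ entering the set and invokes an auxiliary lemma (Lemma~F.1 in the appendix) that uses transitivity of implication, $(\alpha \to \beta) \wedge (\beta \to \gamma) \models \alpha \to \gamma$, to argue that $\set\varphi \models \top \to \bot$. Your argument is purely semantic: you maintain the invariant that every model of the \emph{original} $\varphi$ assigns value~$1$ to each literal in the current set $\mathcal C_k$, proved by induction on the recursion depth. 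Your approach is arguably cleaner and more self-contained --- it makes explicit that $\mathcal A(\varphi,\set\top)$ is a lower bound for every model of $\varphi$, which is the conceptual content behind Proposition~\ref{prop:leastM} as well --- whereas the paper's lemma is stated rather tersely and leaves the chaining of implications to the reader. One small point of care in your write-up: the formula argument to $\mathcal A$ shrinks along the recursion, so when you invoke $V \Vdash C_i \to L_i$ at step~$k$ you should note that this clause, though already removed from the current formula, is still a conjunct of the original $\varphi$; this is immediate but worth saying explicitly.
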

The proof of this result is in Appendix~\ref{app:corr} (as the proofs
of Theorems~\ref{appthm:corr} and ~\ref{appthm:compl}).

\section{Conclusions}
We present herein a new formulation of the Horn algorithm for deciding
the satisfiability problem of propositional logic. We define the
procedure as a recursive function, instead of the usual imperative
formulation in pseudo-code. This presentation has several advantages:
\begin{enumerate}
\item It is concise and readable, being at the same time rigorous;
\item allows for a simple presentation of ``manual'' executions of the
  algorithm, being usable in undergraduate logic courses;
\item has simple inductive proofs of soundness and completeness;
\item leads to optimization results, easy to state, prove, and implement.
\end{enumerate}

We develop such a formulation and show examples of execution, a
correctness proof and some further results useful for optimizations of
the algorithm. Computing solutions for our recursive formulation of
the lagorithm is akin to the fixed point (Knaster-Tarski) least
Herbrand model construction.


\section*{Acknowledgements}
This work was partially supported by NOVA LINCS grant UID/CEC/04516/2013.

\bibliographystyle{eptcs}
\bibliography{hornAlg}

\newpage
\appendix 
\section{The language of Propositional Logic}
We make a brief presentation of the main concepts of Propositional
Logic, to keep the paper self-contained. We define the syntax of the
logic, a satisfaction relation, a notion of logical equivalence, and
finally, a normal form. We omit the proofs of the results presented,
which are standard and may be found in most textbooks
(\cf~\cite{JGallier:lics} or \cite{MHuthMRyan:lics}).

\subsection{Syntax}

We inductively define the language with a minimal set of connectives,
defining the other (redundant) ones as abbreviations.

\begin{definition}
  Let $P$ be a countable set (of propositional symbols).
  The \emph{Propositional Alphabet over a set $P$} is the set
  $\alfp = P \cup \set{\bot,\vee,\wedge,\to,(,)}$
\end{definition}

\begin{definition}\label{def:propL}
  The \emph{Propositional Language induced by \alfp} is the set \fp,
  defined by the following grammar.
  $$\varphi,\psi \grmeq \bot \grmor p \grmor (\varphi\to\psi)$$
\end{definition}
\noindent%
Elements of \fp are called \emph{formulae}. Symbols in $P$ and $\bot$
are \emph{atomic} formulae.

\begin{definition}\label{def:abb}
  The following \emph{abbreviations} are useful.
  \begin{itemize}
  \item Negation: $\neg\varphi \eqabv \varphi \rightarrow \bot$;
  \item Truth: $\top \eqabv \neg\bot$;
  \item Disjunction: $\varphi\vee\psi \eqabv \neg\varphi \rightarrow \psi$;
  \item Conjunction: $\varphi\wedge\psi \eqabv \neg\varphi \vee \neg\psi$;
  \item Equivalence: $\varphi \leftrightarrow \psi \eqabv (\varphi
    \rightarrow \psi) \wedge (\psi \rightarrow \varphi)$.
  \end{itemize}
\end{definition}
\noindent%
Consider that the connective $\neg$ has precedence over all the other.

\subsection{Semantics}

We interpret the formul\ae\ in a Boolean Algebra (like, \eg, in \cite{CoriEtAl:mathlogic1}).

\paragraph{Satisfaction relation.}
We define a valuation of propositional symbols into the naturals 0 and
1.

\begin{definition}\label{def:val}
  A \emph{valuation} over a set $P$ of propositional symbols is a \emph{function}
 $V:P \to \set{0,1}$.
\end{definition}

We now define an interpretation function using the natural operations
addition and multiplication.

\begin{definition}\label{def:interp}
  Consider the set $\set{0,1}$ equipped with two binary operations, $+$
  and $\times$, interpreted as the addition and multiplication
  operations of the naturals, but such that $1+1 = 1$.
  An \emph{interpretation function} of a formula $\varphi \in \fp$,
  for a given valuation $V$, denoted $\fintf$, is a \emph{function}
  $\fintv\cdot:\fp \to \set{0,1}$ inductively defined by the
  following rules:
  \begin{itemize}
  \item $\fintv p = V(p)$, for each $p \in P$;
  \item $\fintv\bot = 0$;
  \item $\fintv{{(\varphi \to \psi)}} = (1-\fintv\varphi) + \fintv\psi$.
\end{itemize}
\end{definition}
\newpage
Naturally, disjunction is interpreted as addition and conjunction as
multiplication.

\begin{lemma}
  The following statements hold.
  \begin{itemize}
  \item $\fintv{{(\varphi \vee \psi)}} = \fintv\varphi + \fintv\psi$;
  \item $\fintv{{(\varphi \wedge \psi)}} = \fintv\varphi \times \fintv\psi$.
  \end{itemize}
\end{lemma}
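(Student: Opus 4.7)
My plan is to unfold each defined connective into its primitive form using the abbreviations from Definition~\ref{def:abb} and then apply the inductive clause for implication in Definition~\ref{def:interp}, reducing each equality to elementary arithmetic in $\set{0,1}$ under the convention $1+1=1$. A prerequisite, established by a routine structural induction on $\varphi$, is that $\fintv\varphi \bin \set{0,1}$ for every $\varphi \bin \fp$; this is what allows me to manipulate $1-\fintv\varphi$ safely and, later, to exploit the saturation $1+1=1$.

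The first step is to derive the auxiliary identity $\fintv{\neg\varphi} = 1 - \fintv\varphi$, which is immediate from $\neg\varphi \babv \varphi\to\bot$, $\fintv\bot = 0$, and the interpretation clause for implication.

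For disjunction, I would unfold $\varphi\vee\psi \babv \neg\varphi\to\psi$ and apply the implication clause to obtain
\[
  \fintv{(\varphi\vee\psi)} \;=\; (1-\fintv{\neg\varphi}) + \fintv\psi \;=\; (1-(1-\fintv\varphi))+\fintv\psi \;=\; \fintv\varphi+\fintv\psi.
\]
For conjunction, I would unfold via the De Morgan abbreviation, apply the auxiliary identity for negation together with the disjunction case just proved, and be left with showing that $1 - ((1-\fintv\varphi)+(1-\fintv\psi))$ equals $\fintv\varphi\times\fintv\psi$.

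The main obstacle is that the arithmetic with $1+1=1$ is not a standard ring, so generic algebraic manipulation can go wrong: for instance, $1-(a+b) = (1-a)(1-b)$ is not an identity over the integers. I therefore settle this final step by a $2\times 2$ case analysis on the possible values of $\fintv\varphi$ and $\fintv\psi$, which is immediate and exhaustive once the range condition $\fintv\varphi,\fintv\psi\bin\set{0,1}$ has been secured by the preliminary induction.
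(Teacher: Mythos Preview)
Your argument is correct, and there is nothing in the paper to compare it against: the appendix explicitly states that proofs of these standard semantic lemmas are omitted and can be found in textbooks. Your unfolding of the abbreviations, the auxiliary identity $\fintv{\neg\varphi}=1-\fintv\varphi$, and the final $2\times 2$ case analysis for conjunction are all sound.

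One remark worth recording: Definition~\ref{def:abb} in the paper literally reads $\varphi\wedge\psi \babv \neg\varphi \vee \neg\psi$, which as printed is a typo (it is $\neg(\varphi\wedge\psi)$, not $\varphi\wedge\psi$, and under it the lemma would actually fail at $\fintv\varphi=\fintv\psi=1$). You have tacitly read it as the intended De~Morgan form $\neg(\neg\varphi\vee\neg\psi)$, and with that reading your computation $1-\bigl((1-\fintv\varphi)+(1-\fintv\psi)\bigr)=\fintv\varphi\times\fintv\psi$ checks in all four cases. It would be worth making this correction explicit when you write up.
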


\begin{definition}\label{def:sat}
  Given a \emph{valuation} $V$ over $P$, the \emph{satisfaction} of a
  formula $\varphi \in \fp$ by the valuation, denoted $V \Vdash
  \varphi$, is a relation containing the pair $(V,\varphi)$, if $\fintf=1$.
\end{definition}

Hereafter we use the following terminology.

\begin{definition}
  \begin{itemize}
  \item Whenever $V \Vdash \varphi$ one says that $\varphi$ is \emph{satisfied} by $V$.
  \item Whenever it is not the case that $V \Vdash \varphi$ (\ie,
    $\varphi$ is not satisfied by $V$), one may write $V \not\Vdash
    \varphi$.
  \item Given $\Phi \subseteq \fp$, one may write $V \Vdash \Phi$,
    whenever $V \Vdash \varphi$ for each $\varphi \in \Phi$.
\end{itemize}
\end{definition}

\begin{lemma}
  The following statements hold.
  \begin{itemize}
  \item $V \not\Vdash \varphi$ if, and only if, $V \Vdash \neg\varphi$
  \item $V \Vdash \varphi$ if, and only if, $V \not\Vdash \neg\varphi$
  \end{itemize}
\end{lemma}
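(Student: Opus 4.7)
The plan is to unfold the abbreviation $\neg\varphi \eqabv \varphi \to \bot$ (Definition~\ref{def:abb}) and then compute the interpretation using the inductive clauses of $\fintv\cdot$ (Definition~\ref{def:interp}). Since $V \Vdash \varphi$ is defined (Definition~\ref{def:sat}) as $\fintv\varphi = 1$, both biconditionals reduce to an arithmetic identity over the two-element set $\set{0,1}$.

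First I would establish the key auxiliary fact that $\fintv{\neg\varphi} = 1 - \fintv\varphi$. By the abbreviation, $\fintv{\neg\varphi} = \fintv{(\varphi \to \bot)}$, and by the implication clause of Definition~\ref{def:interp} this equals $(1 - \fintv\varphi) + \fintv\bot$. Since $\fintv\bot = 0$, the sum collapses to $1 - \fintv\varphi$. Because $\fintv\varphi \in \set{0,1}$, the value $1 - \fintv\varphi$ is well-defined in $\set{0,1}$, so $\fintv{\neg\varphi} = 1$ exactly when $\fintv\varphi = 0$, and $\fintv{\neg\varphi} = 0$ exactly when $\fintv\varphi = 1$.

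For item~1, I would chain the equivalences: $V \not\Vdash \varphi$ iff $\fintv\varphi \neq 1$ iff $\fintv\varphi = 0$ (since the interpretation takes values only in $\set{0,1}$) iff $\fintv{\neg\varphi} = 1$ (by the identity above) iff $V \Vdash \neg\varphi$. Item~2 is then immediate as the contrapositive of item~1, obtained by negating both sides; alternatively, it follows by the same chain of equivalences starting from $V \Vdash \varphi$ iff $\fintv\varphi = 1$ iff $\fintv{\neg\varphi} = 0$ iff $V \not\Vdash \neg\varphi$.

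There is no real obstacle here: the only subtlety worth flagging is the use of $\fintv\varphi \in \set{0,1}$ (so that $\neq 1$ is equivalent to $= 0$), which is guaranteed by the codomain of $\fintv\cdot$ in Definition~\ref{def:interp}. The saturating rule $1+1=1$ plays no role, since one of the summands in $(1-\fintv\varphi)+\fintv\bot$ is always $0$.
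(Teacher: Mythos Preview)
Your proof is correct. The paper does not actually prove this lemma: it is part of the appendix on propositional logic preliminaries, where the authors explicitly state that they ``omit the proofs of the results presented, which are standard and may be found in most textbooks.'' Your argument---unfolding the abbreviation $\neg\varphi \eqabv \varphi \to \bot$, computing $\fintv{\neg\varphi} = 1 - \fintv\varphi$, and then reading off the equivalences from the two-valuedness of the interpretation---is exactly the standard textbook proof one would expect, and it fits the paper's definitions cleanly.
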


\begin{definition}\label{def:term}
  A formula $\varphi \in \fp$ is:
  \begin{itemize}
  \item \emph{satisfiable}, if $V \Vdash \varphi$, for some $V$;
  \item \emph{valid} (denoted $\models\varphi$), if $V \Vdash
    \varphi$, for all $V$;
  \item \emph{contradictory}, if no $V$ is such that $V \Vdash \varphi$.
  \end{itemize}
\end{definition}
One may write $\not\models\varphi$, if $\varphi$ is \emph{not}
valid. The notion of satisfiability also applies to sets of formulae:
a set $\Phi \subseteq \fp$ is \emph{satisfiable}, if there is a $V$
that satisfies every formula in $\Phi$; otherwise, the set is said to
be \emph{contradictory}.

\begin{lemma}
  A formula that is \emph{not}:
  \begin{itemize}
  \item valid, is either satisfiable or contradictory;
  \item contradictory, is either satisfiable or valid;
  \item satisfiable, is contradictory (as it cannot be valid).
  \end{itemize}
\end{lemma}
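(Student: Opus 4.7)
The plan is to treat the three bullets as three short propositional derivations from Definition~\ref{def:term}, since each reduces to unfolding a quantified defining predicate and applying excluded middle. Throughout, I will use the fact that $\varphi$ being \emph{satisfiable} is $\exists V.\, V \Vdash \varphi$, \emph{valid} is $\forall V.\, V \Vdash \varphi$, and \emph{contradictory} is $\forall V.\, V \not\Vdash \varphi$ (equivalently, $\neg \exists V.\, V \Vdash \varphi$).

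For the first bullet, assume $\varphi$ is not valid, so there is some $V_0$ with $V_0 \not\Vdash \varphi$. I would then do a classical case split on whether $\exists V.\, V \Vdash \varphi$ holds: if yes, $\varphi$ is satisfiable by definition; if no, then $\forall V.\, V \not\Vdash \varphi$, so $\varphi$ is contradictory. In either case, the required disjunction holds. (The witness $V_0$ is not actually used; the statement is true for any non-valid formula, and non-validity only becomes substantive when combined with a further assumption.)

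For the second bullet, assume $\varphi$ is not contradictory. Unfolding the definition, $\neg(\forall V.\, V \not\Vdash \varphi)$ is, classically, $\exists V.\, V \Vdash \varphi$, which is precisely satisfiability. Hence the left disjunct of ``satisfiable or valid'' already holds, proving the claim. For the third bullet, assume $\varphi$ is not satisfiable, so $\neg \exists V.\, V \Vdash \varphi$, equivalently $\forall V.\, V \not\Vdash \varphi$; this is literally the definition of contradictory. The parenthetical ``(as it cannot be valid)'' is a remark that is justified by the same argument: valid implies satisfiable whenever at least one valuation exists over $P$, which holds since valuations are total functions $V\colon P \to \{0,1\}$ and such a function always exists (even when $P = \emptyset$, the empty function qualifies).

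There is no real obstacle here: the proof is entirely a matter of negating the quantifiers in Definition~\ref{def:term} and doing a case analysis. The only point where one must be a bit careful is to invoke classical reasoning (excluded middle on $\exists V.\, V \Vdash \varphi$) in the first bullet, and to note the standing assumption that the set of valuations is nonempty when remarking, in the third bullet, that a contradictory formula cannot simultaneously be valid.
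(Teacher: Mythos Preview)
Your proof is correct and is exactly the kind of routine unfolding of Definition~\ref{def:term} plus excluded middle that one expects here. The paper itself does not supply a proof of this lemma: it sits in the appendix on preliminaries, where the author explicitly states that proofs of these standard results are omitted and deferred to textbooks, so there is no paper proof to compare against. Your treatment is adequate and self-contained; the only minor remark is that in the first bullet you correctly observe that the witness $V_0$ for non-validity is not needed, which is worth noting since the disjunction ``satisfiable or contradictory'' in fact holds for \emph{every} formula.
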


\begin{lemma}
  The \emph{negation} of a formula:
  \begin{itemize}
  \item valid, is contradictory;
  \item contradictory, is valid;
  \item satisfiable (not valid), is satisfiable.
  \end{itemize}
\end{lemma}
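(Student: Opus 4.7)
The plan is to apply the preceding lemma---that $V\Vdash\varphi$ iff $V\not\Vdash\neg\varphi$ (and dually $V\not\Vdash\varphi$ iff $V\Vdash\neg\varphi$)---to each of the three clauses in turn. This basically reduces each item to a quantifier manipulation over valuations.

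For the first item, suppose $\varphi$ is valid, i.e.\ $V\Vdash\varphi$ for every valuation $V$. By the equivalence $V\Vdash\varphi\iff V\not\Vdash\neg\varphi$, we get $V\not\Vdash\neg\varphi$ for every $V$, which is the definition of $\neg\varphi$ being contradictory. For the second item, suppose $\varphi$ is contradictory, i.e.\ no $V$ satisfies $\varphi$, so $V\not\Vdash\varphi$ for every $V$. Using $V\not\Vdash\varphi\iff V\Vdash\neg\varphi$, we conclude $V\Vdash\neg\varphi$ for every $V$, hence $\neg\varphi$ is valid.

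For the third item, assume $\varphi$ is satisfiable but not valid. Satisfiability gives some $V_1$ with $V_1\Vdash\varphi$, and non-validity gives some $V_2$ with $V_2\not\Vdash\varphi$. Applying the equivalence to $V_2$ yields $V_2\Vdash\neg\varphi$, which already witnesses that $\neg\varphi$ is satisfiable. As a bonus, applying the other direction of the equivalence to $V_1$ gives $V_1\not\Vdash\neg\varphi$, showing $\neg\varphi$ is also not valid, consistent with the parenthetical ``(not valid)''.

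The proof is essentially mechanical, so no step is a serious obstacle; the only place where care is needed is the third item, since one might be tempted to use only the ``satisfiable'' hypothesis. The hypothesis ``not valid'' is precisely what furnishes a valuation refuting $\varphi$, and it is that valuation---not the one satisfying $\varphi$---that witnesses satisfiability of $\neg\varphi$. Once this is noted, the three bullets follow by a single invocation of the earlier lemma apiece.
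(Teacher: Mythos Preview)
Your proof is correct. The paper does not actually supply a proof of this lemma: in the appendix it explicitly states that proofs of these standard propositional-logic facts are omitted and refers the reader to textbooks. Your argument, which unpacks the definitions of valid, contradictory, and satisfiable and invokes the preceding lemma ($V\Vdash\varphi$ iff $V\not\Vdash\neg\varphi$) at each step, is exactly the expected textbook verification, including the observation in the third item that it is the ``not valid'' hypothesis---not the ``satisfiable'' one---that furnishes the witnessing valuation for $\neg\varphi$.
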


\paragraph{Logical equivalence.}
There are syntactically different formul\ae that evaluate to the
same value, \ie, are equivalent. To rigorously define the notion, we
introduce first the idea a formula resulting  from (or being a
semantic consequence of) a set of formul\ae.

\begin{definition}\label{def:models}
  Let $\Phi \subseteq \fp$ and $\varphi \in \fp$. One may say that a
  formula $\varphi$ is a \emph{semantic consequence} of a set of
  formul\ae\ $\Phi$, denoted by $\Phi \models \varphi$, if whenever $V
  \Vdash \Phi$ also $V \Vdash \varphi$.
\end{definition}

\begin{proposition}
    $\set{\varphi_1,\ldots,\varphi_n} \models \psi\textrm{ if, and
      only if,
    }\models (\varphi_1 \wedge \ldots \wedge \varphi_n) \to \psi$, for any $n \in \nat$.
\end{proposition}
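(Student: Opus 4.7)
The plan is to prove the biconditional directly from the definitions, unpacking what $\models$ and $\Vdash$ mean on both sides. No induction is essential here: the only substantive ingredient is the semantic characterisations of $\wedge$ and $\to$, namely $V \Vdash \varphi \wedge \psi$ iff $V \Vdash \varphi$ and $V \Vdash \psi$, and $V \Vdash \varphi \to \psi$ iff $V \not\Vdash \varphi$ or $V \Vdash \psi$. Both follow from Definition~\ref{def:interp} together with the preceding lemma giving $\fintv{(\varphi \wedge \psi)} = \fintv{\varphi} \times \fintv{\psi}$. (For the case $n=0$, the conjunction over the empty set is understood as $\top$, and both sides reduce to ``$\models \psi$'', so the statement holds trivially; I would mention this explicitly as a base case.)

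For the forward direction, I would assume $\set{\varphi_1,\ldots,\varphi_n} \models \psi$ and fix an arbitrary valuation $V$. By the characterisation of implication, it suffices to show that if $V \Vdash \varphi_1 \wedge \ldots \wedge \varphi_n$ then $V \Vdash \psi$. From the characterisation of conjunction (applied $n-1$ times), the hypothesis on $V$ gives $V \Vdash \varphi_i$ for every $i$, i.e.\ $V \Vdash \set{\varphi_1,\ldots,\varphi_n}$ in the sense of the earlier definition. Then the semantic consequence assumption yields $V \Vdash \psi$, and since $V$ was arbitrary we obtain $\models (\varphi_1 \wedge \ldots \wedge \varphi_n) \to \psi$.

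For the reverse direction, I would assume $\models (\varphi_1 \wedge \ldots \wedge \varphi_n) \to \psi$ and take any $V$ such that $V \Vdash \set{\varphi_1,\ldots,\varphi_n}$. By definition this means $V \Vdash \varphi_i$ for all $i$, so by the conjunction clause $V \Vdash \varphi_1 \wedge \ldots \wedge \varphi_n$. The validity hypothesis, instantiated at $V$, together with the implication clause forces $V \Vdash \psi$. Since $V$ was arbitrary among models of $\set{\varphi_1,\ldots,\varphi_n}$, this is exactly $\set{\varphi_1,\ldots,\varphi_n} \models \psi$.

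The main (mild) obstacle is bookkeeping around the iterated conjunction: I would either fix a left-associative parsing of $\varphi_1 \wedge \ldots \wedge \varphi_n$ and invoke an easy induction on $n$ to justify $V \Vdash \varphi_1 \wedge \ldots \wedge \varphi_n \iff \forall i.\ V \Vdash \varphi_i$, or simply cite this as a standard consequence of associativity/commutativity of $\wedge$ under $\equiv$. Nothing else in the argument is delicate; the whole proof is essentially a translation between the metalinguistic ``for all $V$'' of validity and the metalinguistic implication hidden inside $\models$.
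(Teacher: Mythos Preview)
Your proof is correct and entirely standard: unfolding the definitions of $\models$ and $\Vdash$ and using the semantic clauses for $\wedge$ and $\to$ is exactly what is needed, and your handling of the $n=0$ case and of the iterated conjunction is fine. The paper itself does not prove this proposition; it explicitly states at the start of the appendix that the proofs of these standard propositional-logic facts are omitted and refers the reader to textbooks, so there is nothing to compare against beyond noting that your argument is the expected one.
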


\begin{lemma}
  The following statements hold.
  \begin{enumerate}
  \item $\set{\bot} \models \varphi$
  \item $\set{\varphi \wedge \psi} \models \varphi$ and $\set{\varphi
      \wedge \psi} \models \psi$
  \item $\set{\varphi} \models \varphi \vee \psi$ and $\set{\psi}
    \models \varphi \vee \psi$
  \end{enumerate}
\end{lemma}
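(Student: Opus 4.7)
My plan is to prove each of the three items directly by unfolding the definition of semantic consequence (Definition~\ref{def:models}): for any $\Phi$ and $\varphi$, one must check that every valuation $V$ satisfying $\Phi$ also satisfies $\varphi$. In each case I will take an arbitrary valuation $V$ and reason by the interpretation rules of Definition~\ref{def:interp} together with the immediately preceding lemma characterising $\fintv{(\varphi \vee \psi)}$ and $\fintv{(\varphi \wedge \psi)}$ as sum and product respectively.

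For item 1, I would argue vacuously. Suppose $V \Vdash \set{\bot}$; then $V \Vdash \bot$, which by Definition~\ref{def:sat} requires $\fintv\bot = 1$. But Definition~\ref{def:interp} fixes $\fintv\bot = 0$, a contradiction. Hence no valuation satisfies $\set\bot$, and the implication ``$V \Vdash \set\bot \Rightarrow V \Vdash \varphi$'' holds trivially for any $\varphi \in \fp$.

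For item 2, assume $V \Vdash \varphi \wedge \psi$, so $\fintv{(\varphi \wedge \psi)} = 1$. By the preceding lemma this equals $\fintv\varphi \times \fintv\psi$, with both factors in $\set{0,1}$; since the product is $1$, each factor must be $1$, so $\fintv\varphi = 1$ and $\fintv\psi = 1$, giving $V \Vdash \varphi$ and $V \Vdash \psi$. For item 3, assume $V \Vdash \varphi$, so $\fintv\varphi = 1$. Then $\fintv{(\varphi \vee \psi)} = \fintv\varphi + \fintv\psi = 1 + \fintv\psi$, and since $1+0 = 1$ and $1+1 = 1$ by the convention of Definition~\ref{def:interp}, one obtains $\fintv{(\varphi \vee \psi)} = 1$, i.e.\ $V \Vdash \varphi \vee \psi$. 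The symmetric case starting from $V \Vdash \psi$ is identical, using commutativity of $+$ on $\set{0,1}$.

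There is no real obstacle here: all three items reduce to a one-line arithmetic fact about the operations on $\set{0,1}$ introduced in Definition~\ref{def:interp}, together with the characterisation of $\vee$ and $\wedge$ from the preceding lemma. The only point worth being explicit about is that item 1 is vacuous rather than requiring any property of $\varphi$, and that in items 2 and 3 one genuinely uses the fact that the interpretation values lie in $\set{0,1}$ (so that a product equal to $1$ forces both factors to be $1$, and a sum with a $1$ summand is $1$).
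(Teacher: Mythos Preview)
Your proof is correct. The paper does not actually supply a proof of this lemma: at the start of the appendix it states that proofs of the results there are omitted as standard and may be found in textbooks. Your argument---unfolding Definition~\ref{def:models}, using Definition~\ref{def:interp} for $\bot$, and invoking the preceding lemma's arithmetic characterisation of $\wedge$ and $\vee$---is exactly the expected textbook route, so there is nothing to compare against.
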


\begin{definition}\label{def:equiv}
  The formulae $\varphi, \psi \in \fp$ are \emph{logically
    equivalent}, denoted by $\varphi \equiv \psi$, whenever $\set\varphi
  \models \psi$ if, and only if, $\set\psi \models \varphi$.
\end{definition}

\begin{theorem}
  The binary relation $\equiv$ on $\fp$ is a \emph{congruence relation}.
\end{theorem}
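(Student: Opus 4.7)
The plan is to unfold the definition of congruence for the algebra generated by the grammar of Definition~\ref{def:propL}. This requires establishing (i) that $\equiv$ is an equivalence relation on $\fp$, and (ii) that $\equiv$ is compatible with each non-nullary constructor of the language. Since the minimal grammar has only one non-nullary operator, namely $\to$, step (ii) reduces to a single clause; the derived connectives $\neg,\top,\vee,\wedge,\leftrightarrow$ of Definition~\ref{def:abb} are just abbreviations over $\bot$ and $\to$, so their compatibility follows as an immediate corollary.

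I would first give a semantic reformulation of Definition~\ref{def:equiv}. Combining it with Definition~\ref{def:models}, the intended content is that $\varphi \equiv \psi$ holds exactly when, for every valuation $V$ over $P$, $V \Vdash \varphi$ iff $V \Vdash \psi$, equivalently $\fintv\varphi = \fintv\psi$. Under this reformulation, reflexivity and symmetry are immediate (the pointwise biconditional is reflexive and symmetric), and transitivity follows by chaining the two biconditionals at each fixed $V$.

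For compatibility with $\to$, I would fix $\varphi_1 \equiv \varphi_2$ and $\psi_1 \equiv \psi_2$, take an arbitrary valuation $V$, and use the clause $\fintv{(\varphi \to \psi)} = (1-\fintv\varphi)+\fintv\psi$ of Definition~\ref{def:interp}. By the semantic reformulation, $\fintv{\varphi_1}=\fintv{\varphi_2}$ and $\fintv{\psi_1}=\fintv{\psi_2}$, so the interpretation clause yields $\fintv{(\varphi_1 \to \psi_1)} = \fintv{(\varphi_2 \to \psi_2)}$. Since $V$ was arbitrary, this gives $(\varphi_1 \to \psi_1) \equiv (\varphi_2 \to \psi_2)$. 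The same one-line computation, applied to the definitional expansions in Definition~\ref{def:abb}, transfers compatibility to each derived connective; formally, this is a straightforward induction on formula structure.

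The main (mild) obstacle is the wording of Definition~\ref{def:equiv} itself: read literally, the outer ``if, and only if'' couples the two entailments $\set\varphi \models \psi$ and $\set\psi \models \varphi$ into a biconditional that holds vacuously whenever neither entailment does, and such a relation would fail transitivity. I would therefore open the proof by justifying that the intended reading — consistent with the standard notion used implicitly in Lemma~\ref{lem:HornClauses} and throughout the examples — is the conjunction of both entailments, so that $\varphi \equiv \psi$ really says ``$\fintv\varphi = \fintv\psi$ for every $V$''. Once this reading is fixed, nothing deeper than Definitions~\ref{def:val}--\ref{def:interp} is needed, and the argument is a direct pointwise unfolding.
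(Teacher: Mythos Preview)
The paper does not prove this theorem: the appendix opens with the disclaimer that the results there ``are standard and may be found in most textbooks'', and no argument is supplied. Your plan is exactly the textbook route---reduce $\equiv$ to pointwise equality of the interpretation function, verify reflexivity/symmetry/transitivity, then check compatibility with the single primitive constructor $\to$---and it is correct.

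Your observation about Definition~\ref{def:equiv} is apt and worth keeping: under the literal ``iff'' reading the relation would identify any two mutually non-entailing formulae and would indeed fail transitivity (for instance $p \equiv q$ and $q \equiv p \wedge \neg q$ would both hold while $p \equiv p \wedge \neg q$ would not). Fixing the reading to the conjunction of the two entailments, i.e.\ to ``$\fintv\varphi = \fintv\psi$ for every $V$'', is necessary for the theorem to be true, and once that is done your argument goes through without further subtlety.
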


\paragraph{Conjunctive Normal Form.}

\begin{definition}\label{def:lit}
  A \emph{literal} is an atomic formula (said \emph{positive}) or the
  negation of an atomic formula (said \emph{negative}).
\end{definition}
Recall that $\top \eqabv \neg\bot$ (being thus a negative literal).

\begin{lemma}\label{lem:disjLit}
  A disjunction of literals $\bigvee_{i=1}^nL_i$, with $n \geq 1$, is valid
  if, and only if, there are $1 \leq i,j \leq n$ such that $L_i=\top$ or $L_i=\neg L_j$.
\end{lemma}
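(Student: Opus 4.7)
The plan is to prove both directions separately, with the backward direction being a direct semantic verification and the forward direction being proved by contrapositive through the construction of a falsifying valuation.

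For the $(\Leftarrow)$ direction, I would fix an arbitrary valuation $V$ and show $\fintv{\bigvee_{i=1}^n L_i} = 1$. If some $L_i = \top \eqabv \neg\bot$, then $\fintv{L_i} = 1 - \fintv{\bot} = 1 - 0 = 1$, so the disjunction evaluates to $1$ by the semantics of $\vee$ given in the preceding lemma. If instead $L_i = \neg L_j$ for some $i,j$, then $\fintv{L_i} + \fintv{L_j} = \fintv{L_j} + (1 - \fintv{L_j}) = 1$ (using $1+1 = 1$), so again the disjunction evaluates to $1$. Since $V$ was arbitrary, the disjunction is valid.

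For the $(\Rightarrow)$ direction, I would argue by contrapositive: assuming no $L_i$ equals $\top$ and no pair $L_i, L_j$ is complementary, I build a valuation $V$ such that $\fintv{L_i} = 0$ for every $i$. Partition the indices by the shape of $L_i$: either $L_i = \bot$ (an atomic positive literal), $L_i = p$ for some $p \in P$, or $L_i = \neg q$ for some $q \in P$ (note that $L_i = \neg\bot = \top$ is excluded by hypothesis). Define $V(p) = 0$ for every $p \in P$ that appears positively as some $L_i$, and $V(q) = 1$ for every $q \in P$ such that $\neg q$ appears as some $L_i$; extend $V$ arbitrarily on the remaining symbols. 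The no-complementary-pair hypothesis guarantees that no symbol is simultaneously required to take both values, so $V$ is a well-defined function. Then $\fintv{L_i} = 0$ in every case, hence $\fintv{\bigvee_{i=1}^n L_i} = 0$, so the disjunction is not valid.

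The main obstacle I expect is just book-keeping around the two distinguished atomic constants $\bot$ (a positive literal, always false) and $\top \eqabv \neg\bot$ (a negative literal, always true): one must make sure the case analysis in the contrapositive covers literals of the form $\bot$ cleanly, and one must justify why excluding $L_i = \top$ is the right hypothesis on the forward side (as opposed to excluding $L_i = \neg\bot$ or $\bot$ itself). Once this is handled, both directions reduce to routine evaluations using the interpretation function from Definition~\ref{def:interp} and the disjunction/conjunction clauses of the immediately preceding lemma.
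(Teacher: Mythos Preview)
Your argument is correct. The backward direction is a straightforward semantic check, and in the forward direction your contrapositive construction of a falsifying valuation is sound: the hypothesis that no $L_i$ equals $\neg L_j$ rules out any propositional symbol occurring both positively and negatively, so the assignment $V(p)=0$ for positive occurrences and $V(q)=1$ for negative ones is consistent; the remaining case $L_i=\bot$ is harmless since $\fintv{\bot}=0$ regardless of $V$, and the case $L_i=\top$ is excluded by hypothesis. The book-keeping around $\bot$ and $\top$ that you flag as a potential obstacle is already handled by your case split.

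As for comparison with the paper: there is nothing to compare. Lemma~\ref{lem:disjLit} is stated in the appendix on the propositional background, where the author explicitly writes that the proofs of those results are omitted as standard and refers the reader to textbooks. Your proof is exactly the standard one such a textbook would give.
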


\begin{definition}\label{def:cnf}
  A formula $\varphi \in \fp$ is in \emph{Conjunctive Normal Form}, if
  it is a conjunction of disjunctions of literals.
\end{definition}
Consider a predicate $\opnsf{CNF}$ such that $\opnsf{CNF}(\varphi)$
holds if $\varphi$ is in conjunctive normal form.

\begin{lemma}
  A formula $\varphi \in \fp$ such that $\opnsf{CNF}(\varphi)$ is:
  \begin{itemize}
  \item valid, if all disjunctions are valid;
  \item contradictory, if some of the disjunctions are contradictory;
  \item satisfiable, otherwise.
  \end{itemize}
\end{lemma}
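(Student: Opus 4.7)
The plan is to unpack the conjunctive-normal-form structure $\varphi = \bigwedge_{i=1}^{n} D_i$, where each $D_i$ is a disjunction of literals, and to appeal to the auxiliary lemma following Definition~\ref{def:interp} which gives $\fintv{\varphi} = \prod_{i=1}^{n} \fintv{D_i}$ in the $\{0,1\}$-semiring for every valuation $V$. All three bullets then reduce to observations about when this product is forced to $1$, when it is forced to $0$, and when a witness valuation can be produced making every factor $1$.

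For the first bullet I would fix an arbitrary $V$ and note that if every $D_i$ is valid then $\fintv{D_i}=1$ for each $i$, so the product equals $1$; since $V$ was arbitrary, $V \Vdash \varphi$ for every $V$ and $\varphi$ is valid by Definition~\ref{def:term}. For the second bullet I would isolate the contradictory clause $D_k$: for any $V$, $\fintv{D_k}=0$, and a single zero factor sends the whole product to $0$, so $V \not\Vdash \varphi$ for every $V$ and $\varphi$ is contradictory. Both of these cases are essentially rewrites of the definition of the interpretation function and should be a couple of lines each.

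For the third bullet the real work is to produce an actual witness valuation. Using Lemma~\ref{lem:disjLit} on validity of disjunctions together with the observation that a disjunction of literals is contradictory exactly when every one of its literals is $\bot$, each clause $D_i$ must contain at least one literal $L_i^{\star}$ of the form $\top$, $p$, or $\neg p$. The intended construction is to select one such witness per clause and then define $V$ so that every $L_i^{\star}$ evaluates to $1$, which gives $\fintv{D_i}=1$ for all $i$ and hence $V \Vdash \varphi$.

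The main obstacle is precisely this third case: witness literals chosen from distinct clauses can, a priori, impose conflicting demands on $V$ (e.g.\ one clause providing only $p$ and another only $\neg p$), in which case the naive selection does not yield a well-defined valuation. I would address this by first arguing that in the ``otherwise'' regime one may, without loss of generality, replace any literal already forced (e.g.\ a unit clause $p$) by propagating its value into the other clauses, and then by induction on the number of clauses reduce to the situation where each remaining $D_i$ has a free literal that can be set independently. This propagation step is the subtle point of the proof and the place where the contrapositive of the first two bullets (``not all valid'' and ``none contradictory'') is used essentially; I would write it out carefully because it is exactly where CNF-satisfiability becomes non-trivial.
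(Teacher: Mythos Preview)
The paper does not actually prove this lemma: it sits in the appendix of background material, which explicitly states that proofs are omitted and can be found in standard textbooks. So there is no paper proof to compare against, and I can only assess your argument on its own merits.

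Your treatment of the first two bullets is fine and is exactly the two-line unfolding you describe. The third bullet, however, is not salvageable by the propagation-and-induction strategy you sketch, because the statement as written is simply false. Take $\varphi = p \wedge \neg p$: each conjunct is a disjunction of a single literal, neither conjunct is valid, neither conjunct is contradictory, yet $\varphi$ is contradictory. So the ``otherwise'' clause cannot guarantee satisfiability. You correctly sensed the difficulty when you noted that witness literals from different clauses may conflict; that conflict is not a technical obstacle to be engineered around but a genuine counterexample to the claim. Your proposed fix---propagate unit clauses and argue the residual clauses have independent free literals---fails on this very example: propagating $p$ turns the second clause into the empty disjunction, which is contradictory, but that is a property of a \emph{different} formula, not of the original $\varphi$ whose clauses were assumed non-contradictory.

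In short: the first two implications are correctly argued; the third is unprovable because it is false, and the paper (which omits the proof) presumably intended something weaker or was simply imprecise here. If you want a true statement in this slot, the honest version is that the ``otherwise'' case leaves both satisfiability and contradiction open---which is precisely why one needs an algorithm such as Horn's in the first place.
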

Any propositional formula is convertible in an equivalent formula in
conjunctive normal form.

\begin{theorem}
  For any formula $\varphi \in \fp$ there is a formula $\psi \in \fp$
  such that $\varphi \equiv \psi$ and moreover, $\opnsf{CNF}(\psi)$.
\end{theorem}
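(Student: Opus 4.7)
The plan is to proceed by structural induction on $\varphi \in \fp$, following the minimal grammar $\varphi \grmeq \bot \grmor p \grmor (\varphi_1 \to \varphi_2)$ of Definition~\ref{def:propL}. The base cases are immediate: $\bot$ is atomic, hence a literal, hence a trivial one-clause CNF consisting of a single one-literal disjunction, and likewise any $p \in P$. So in both cases I take $\psi \deft \varphi$.

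For the inductive step $\varphi = (\varphi_1 \to \varphi_2)$, the induction hypothesis yields CNFs $\psi_1 \equiv \varphi_1$ and $\psi_2 \equiv \varphi_2$. Using Definition~\ref{def:abb} and the fact that $\equiv$ is a congruence, I would rewrite $\varphi \equiv \neg\psi_1 \vee \psi_2$. It then suffices to establish two auxiliary closure lemmas, both provable semantically via the interpretation function of Definition~\ref{def:interp}. First, a \emph{negation lemma}: for every CNF $\chi$, there is a CNF $\chi'$ with $\neg\chi \equiv \chi'$; the proof pushes $\neg$ inward by De Morgan, turning $\neg\chi$ into a disjunction of conjunctions of literals (a DNF), and then distributes $\vee$ over $\wedge$ via $(A \wedge B) \vee C \equiv (A \vee C) \wedge (B \vee C)$ to regain CNF form. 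Second, a \emph{disjunction lemma}: if $\alpha = \bigwedge_{i=1}^m A_i$ and $\beta = \bigwedge_{j=1}^n B_j$ are CNFs with each $A_i, B_j$ a disjunction of literals, then
\[
  \alpha \vee \beta \;\equiv\; \bigwedge_{i=1}^{m}\bigwedge_{j=1}^{n}\bigl(A_i \vee B_j\bigr),
\]
which is again CNF since each $A_i \vee B_j$ is a disjunction of literals.

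Combining the two, the negation lemma produces a CNF $\chi_1 \equiv \neg\psi_1$, and the disjunction lemma then yields a CNF $\psi$ equivalent to $\chi_1 \vee \psi_2$; by congruence of $\equiv$, $\psi \equiv \varphi$, completing the induction.

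The main obstacle I expect is the negation lemma. Because the primitive syntax only has $\to$ and $\bot$, everything else (including $\neg$, $\vee$, $\wedge$) unfolds to these, so the De Morgan and distribution equivalences are not syntactic rewrites but have to be justified through the semantics (or via an auxiliary structural induction producing an NNF stage before the CNF stage). The cleanest route is to verify each required equivalence — $\neg(\alpha \wedge \beta) \equiv \neg\alpha \vee \neg\beta$, $\neg(\alpha \vee \beta) \equiv \neg\alpha \wedge \neg\beta$, $\neg\neg\alpha \equiv \alpha$, and the distributivity of $\vee$ over $\wedge$ — once and for all by inspecting $\fintv\cdot$ on $\set{0,1}$, and then to organise the negation lemma as an inner induction on the number of connectives above the literal layer of $\chi$. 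Everything else is bookkeeping enabled by the congruence property of $\equiv$.
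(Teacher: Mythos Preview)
Your argument is correct and follows a standard textbook route: structural induction on the minimal grammar, with auxiliary closure lemmas for negation and disjunction of CNFs, each justified semantically via the Boolean interpretation. The only remark is organisational: the negation lemma as you state it hides a full DNF-to-CNF conversion, so the inner induction there does real work (and incurs the usual exponential blowup). A common textbook variant sidesteps this by strengthening the induction hypothesis to produce \emph{both} a CNF and a DNF equivalent of each subformula simultaneously; then negation just swaps the two (via De Morgan at the literal level), and no separate negation lemma is needed. Either packaging is fine.

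As for comparison with the paper: there is nothing to compare. The paper explicitly omits the proof of this theorem, stating at the start of the appendix that the results there ``are standard and may be found in most textbooks'' and citing Gallier and Huth--Ryan. Your proposal is precisely the kind of argument those references give.
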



\section{Conversion to Horn Formula}\label{app:hf}

Any basic Horn clause may be presented as an implication (\cf~Lemma ~\ref{lem:HornClauses}).

\begin{lemma}
  Let $L$ be a positive literal.
  \begin{enumerate}
  \item $L \equiv \top \rightarrow L$
  \item $\bigvee_{i=1}^n \neg L_i \equiv (\bigwedge_{i=1}^n L_i)
    \rightarrow \bot$
  \item $\bigvee_{i=1}^n \neg L_i \vee L \equiv (\bigwedge_{i=1}^n L_i)
    \rightarrow L$
  \end{enumerate}
\end{lemma}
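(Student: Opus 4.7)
The plan is to prove each equivalence semantically, working directly from the interpretation function of Definition~\ref{def:interp} and the definition of logical equivalence (Definition~\ref{def:equiv}). The general strategy is to fix an arbitrary valuation $V$ and show that, under $\fintv{\cdot}$, both sides of each equivalence yield the same value in $\set{0,1}$; by Definition~\ref{def:equiv} this suffices. Two preparatory computations are useful: since $\top \eqabv \neg\bot$, one obtains $\fintv\top = 1$, and $\fintv{\neg\varphi} = 1 - \fintv\varphi$ for any $\varphi$.

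For item~(1), I would unfold the implication directly: $\fintv{\top \to L} = (1 - \fintv\top) + \fintv L = 0 + \fintv L = \fintv L$, so $V \Vdash L$ iff $V \Vdash \top \to L$ for every valuation $V$, which gives the equivalence.

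For items~(2) and~(3) I would proceed by induction on $n$, relying on the derived identities $\fintv{\varphi \vee \psi} = \fintv\varphi + \fintv\psi$ and $\fintv{\varphi \wedge \psi} = \fintv\varphi \times \fintv\psi$ (stated in the lemma just after Definition~\ref{def:interp}). The base case $n=1$ is immediate from $\fintv{\neg L_1} = 1 - \fintv{L_1}$. For the inductive step of (2), the key chain of biconditionals is: $\fintv{\bigvee_{i=1}^{n+1} \neg L_i} = 1$ iff some $\fintv{\neg L_i} = 1$ iff some $\fintv{L_i} = 0$ iff $\fintv{\bigwedge_{i=1}^{n+1} L_i} = 0$ iff $\fintv{(\bigwedge_{i=1}^{n+1} L_i) \to \bot} = 1$. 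Case~(3) reduces to the same pattern: adding the extra disjunct $L$ on the left corresponds to replacing the consequent $\bot$ by $L$ on the right, and both sides evaluate to $1$ precisely when some $L_i$ is false under $V$ or $L$ is true under $V$; alternatively, one can combine (1) and (2) by rewriting the right-hand side as $(\bigwedge_i L_i) \to (\top \to L)$ and manipulating.

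The main obstacle is the non-standard idempotent arithmetic $1+1 = 1$ on $\set{0,1}$, which must be handled carefully in the inductive step for the big disjunction: the sum $\sum_{i=1}^{n+1} \fintv{\neg L_i}$ has to collapse to $1$ precisely when at least one summand is $1$, and this is the point where the unusual addition does its work. Once that idempotency is justified (itself a trivial induction), the rest is routine case analysis over the finitely many possible values of the $\fintv{L_i}$ and $\fintv L$.
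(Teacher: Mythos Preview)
Your argument is correct, but it follows a different route from the paper's. The paper works \emph{syntactically}: each equivalence is obtained by chaining standard propositional laws and relying on $\equiv$ being a congruence. For item~(1) the paper writes $L \equiv L \vee \bot \equiv L \vee \neg\neg\bot \equiv L \vee \neg\top \equiv \neg\top \vee L \equiv \top \to L$. For items~(2) and~(3) the paper also proceeds by induction on $n$, but the inductive step is driven by an auxiliary equational law, $(\varphi \to \gamma) \vee (\psi \to \gamma) \equiv (\varphi \wedge \psi) \to \gamma$, which is itself derived via De~Morgan and idempotence of $\vee$; the induction hypothesis is then invoked explicitly to collapse the first $n$ disjuncts into $(\bigwedge_{i=1}^n L_i) \to \gamma$ before applying the law once more.

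Your approach is \emph{semantic}: you fix a valuation and compute $\fintv{\cdot}$ on both sides directly. This is arguably more self-contained, since you never need to appeal to congruence of $\equiv$ or to a catalogue of named equivalences; the only non-trivial ingredient is the idempotent addition $1+1=1$, which you correctly identify as the place where ``some summand is $1$'' becomes ``the sum is $1$''. Note, incidentally, that your chain of biconditionals for~(2) does not actually use the induction hypothesis --- it is really a direct argument for arbitrary $n$ once the behaviour of iterated $+$ and $\times$ on $\set{0,1}$ is established --- so you could dispense with the induction on $n$ altogether and simply prove the two one-line facts $\fintv{\bigvee_i \psi_i}=1 \Leftrightarrow \exists i.\,\fintv{\psi_i}=1$ and $\fintv{\bigwedge_i \psi_i}=1 \Leftrightarrow \forall i.\,\fintv{\psi_i}=1$ by induction first. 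The paper's equational route, by contrast, genuinely needs the inductive step and the auxiliary law, but has the advantage of staying at the level of formulae without ever unfolding the interpretation.
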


\begin{proof}
  We use below standard equivalences of Propositional Logic. 
  Recall that logical equivalence is a congruence relation.
  \begin{enumerate}
  \item $L \equiv \top \rightarrow L$
    \begin{align*}
      L & \equiv L \vee \bot \\
      & \equiv L \vee \neg\neg\bot \\
      & \equiv L \vee \neg\top \\
      & \equiv \neg\top \vee L \\
      & \equiv \top \to L
    \end{align*}
%
  \item $\bigvee_{i=1}^n \neg L_i \equiv (\bigwedge_{i=1}^n L_i) \to \bot$
    The proof is by natural induction, using the following law.
    $$(\varphi \to \gamma) \vee (\psi \to \gamma)
    \equiv (\varphi \wedge \psi) \to \gamma$$
    \begin{description}
    \item[Base case:] n=1.
      $$\bigvee_{i=1}^n \neg L_i  = \neg L_1 \equiv \neg L_1 \vee \bot \equiv L_1 \to \bot$$
    \item[Inductive step:]
      $$\bigvee_{i=1}^{n+1} \neg L_i  = \bigvee_{i=1}^{n} \neg L_i \vee
      \neg L_{n+1} \equiv ((\bigwedge_{i=1}^n L_i) \to \bot) \vee
      (L_{n+1} \to \bot) \equiv (\bigwedge_{i=1}^{n+1}L_i) \to \bot$$
    \end{description}

    The proof of the auxiliar law is easy.
    \begin{align*}
      (\varphi \wedge \psi) \to \gamma & \equiv \neg(\varphi \wedge \psi) \vee \gamma \\
      & \equiv (\neg\varphi \vee \neg\psi) \vee \gamma \\
      & \equiv (\neg\varphi \vee \neg\psi) \vee (\gamma \vee \gamma) \\
      & \equiv (\neg\varphi \vee \gamma) \vee (\neg\psi \vee \gamma) \\
      & \equiv (\varphi \to \gamma) \vee (\psi \to \gamma)
    \end{align*}
  \item $\bigvee_{i=1}^n \neg L_i \vee L \equiv (\bigwedge_{i=1}^n L_i) \to L$\\

    The proof is by natural induction.
    \begin{description}
    \item[Base case:] n=1.
      $$\bigvee_{i=1}^n \neg L_i \vee L  = \neg L_1 \vee L \equiv L_1 \to L$$
    \item[Inductive step:]
      \begin{align*}
        \bigvee_{i=1}^{n+1} \neg L_i \vee L & \equiv (\bigvee_{i=1}^n \neg L_i \vee \neg L_{n+1}) \vee (L \vee L) \\
        & \equiv (\bigvee_{i=1}^n \neg L_i \vee L) \vee (\neg L_{n+1} \vee L) \\
        & \equiv (\bigwedge_{i=1}^n L_i \to L) \vee (L_{n+1} \to L) \\
        & \equiv (\bigwedge_{i=1}^{n+1} L_i) \to L
      \end{align*}
    \end{description}
  \end{enumerate}
\end{proof}


\section{Least Fixed-Point}\label{app:lfp}

We present here the proof of Lemma~\ref{lem:lfp}.

\begin{lemma}\label{app:lem:lfp}
  Let $\varphi$ be a Horn formula, \ie,
  $\varphi = \bigwedge_{i=1}^n(C_i \rightarrow L_i)$.
  The function $\mathcal A$ is:
  \begin{enumerate}
  \item\label{inc} increasing: $\mathcal C \subseteq \mathcal A(\varphi,\mathcal C)  \subseteq \mathcal C \cup \bigcup_{i=1}^n \set{L_i}$;
  \item\label{mon} and monotone: if $\mathcal C \subseteq \mathcal D$ then
    $\mathcal A(\varphi,\mathcal C) \subseteq \mathcal A(\varphi,\mathcal D)$.
  \end{enumerate}
\end{lemma}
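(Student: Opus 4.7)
The plan is to prove both parts by induction on the number $n$ of clauses in $\varphi$, treating $\mathcal{A}$ as specified by its recursive definition and reasoning about an arbitrary admissible computation (so that possible non-determinism in the choice of $i$ is handled uniformly).

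For part (\ref{inc}), the base case is $\varphi \equiv \top$, where $\mathcal{A}(\varphi,\mathcal{C}) = \mathcal{C}$ (reading the definition with the evident correction) and both inclusions are trivial since $\bigcup_{i=1}^0\{L_i\} = \emptyset$. For the inductive step with $\varphi = \bigwedge_{i=1}^n(C_i \to L_i)$, there are two cases. If no $i$ satisfies $\mathrm{set}(C_i) \subseteq \mathcal{C}$, then $\mathcal{A}(\varphi,\mathcal{C}) = \mathcal{C}$ and the inclusions hold immediately. Otherwise, there is some $i$ with $\mathrm{set}(C_i) \subseteq \mathcal{C}$ and $\mathcal{A}(\varphi,\mathcal{C}) = \mathcal{A}(\varphi \setminus (C_i \to L_i), \mathcal{C} \cup \{L_i\})$. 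Since $\varphi \setminus (C_i \to L_i)$ has $n-1$ clauses, the induction hypothesis yields
\[
 \mathcal{C} \cup \{L_i\} \;\subseteq\; \mathcal{A}(\varphi \setminus (C_i \to L_i), \mathcal{C} \cup \{L_i\}) \;\subseteq\; \mathcal{C} \cup \{L_i\} \cup \bigcup_{j \neq i}\{L_j\},
\]
from which $\mathcal{C} \subseteq \mathcal{A}(\varphi,\mathcal{C}) \subseteq \mathcal{C} \cup \bigcup_{j=1}^n\{L_j\}$ is immediate.

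For part (\ref{mon}), I would again induct on $n$, mimicking the chosen computation of $\mathcal{A}(\varphi,\mathcal{C})$ inside a computation of $\mathcal{A}(\varphi,\mathcal{D})$. The base case $n=0$ is trivial. For the inductive step, if the computation for $\mathcal{C}$ falls in the ``otherwise'' branch, then $\mathcal{A}(\varphi,\mathcal{C}) = \mathcal{C} \subseteq \mathcal{D} \subseteq \mathcal{A}(\varphi,\mathcal{D})$, the last inclusion being part (\ref{inc}) applied to $\mathcal{D}$. Otherwise, the computation picks some $i$ with $\mathrm{set}(C_i) \subseteq \mathcal{C}$; since $\mathcal{C} \subseteq \mathcal{D}$ we also have $\mathrm{set}(C_i) \subseteq \mathcal{D}$, so the same step is admissible for $\mathcal{D}$, giving
\[
 \mathcal{A}(\varphi,\mathcal{D}) \;\supseteq\; \mathcal{A}(\varphi \setminus (C_i \to L_i), \mathcal{D} \cup \{L_i\}).
\]
Applying the induction hypothesis to the pair $\mathcal{C} \cup \{L_i\} \subseteq \mathcal{D} \cup \{L_i\}$ on the smaller formula yields $\mathcal{A}(\varphi \setminus (C_i \to L_i), \mathcal{C} \cup \{L_i\}) \subseteq \mathcal{A}(\varphi \setminus (C_i \to L_i), \mathcal{D} \cup \{L_i\})$, closing the argument.

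The main obstacle is the apparent non-determinism of $\mathcal{A}$: because the uniqueness statement (Theorem~\ref{thm:term}) is stated only later, one cannot quietly assume a canonical value. My induction therefore proceeds along a fixed derivation for the $\mathcal{C}$-side and constructs a matching derivation on the $\mathcal{D}$-side, relying on the key observation that enlarging the accumulator can only make more clauses applicable, never fewer. A minor bookkeeping point is the typo $\mathcal{A}(\varphi,\mathcal{C}) = \top$ in the base clause of Definition~\ref{def:HeA}, which must be read as returning $\mathcal{C}$ for the statement to make sense.
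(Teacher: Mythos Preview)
Your proof is correct and follows essentially the same approach as the paper: induction on the number of clauses, case-splitting on whether some antecedent is contained in the accumulator, and applying the induction hypothesis to the one-clause-smaller formula with the enlarged accumulator. The only cosmetic differences are that the paper takes $n=1$ rather than $n=0$ as the base case and treats both items simultaneously, while you separate them and make explicit both the use of part~(\ref{inc}) in the ``otherwise'' branch of part~(\ref{mon}) and the handling of the non-determinism, points the paper leaves implicit.
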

\begin{proof}
  The proofs of both cases are so similar that we present them
  together. If $\mathcal A(\varphi,\mathcal C) = \mathcal C$, the
  results hold trivially. Otherwise, let
  $\mathcal C' = \mathcal A(\varphi,\mathcal C)$ and
  $\mathcal D' = \mathcal A(\varphi,\mathcal D)$.
  We proceed by natural induction on the number of clauses in $\varphi$.
  \begin{description}
  \item[Base case:] let $\varphi = C \to L$. Since $\opnsf{set}(C)
    \subseteq \mathcal C$ (as $\mathcal A(\varphi,\mathcal C) \not=
    \mathcal C$), then $\mathcal A(C \to L,\mathcal C) = \mathcal C
    \cup \set L$.  By hypothesis $\mathcal C \subseteq \mathcal D$,
    thus $\opnsf{set}(C) \subseteq \mathcal D$. Therefore, $\mathcal C
    \subseteq \mathcal C' = \mathcal C \cup \set L \subseteq \mathcal
    D \cup \set L = \mathcal D'$, and thus $\mathcal A$ is increasing
    and monotone.
    \item[Inductive step:] let
      $\varphi = C \to L \wedge \bigwedge_{i=1}^{n+1}(C_i \rightarrow L_i)$, 
      where $n \geq 0$. Assume, without loss of generality, that
      $\opnsf{set}(C) \subseteq \mathcal C$. Then,
      $$\mathcal C' = \mathcal A(\varphi,\mathcal C) =
        \mathcal A(\bigwedge_{i=1}^{n+1}(C_i \rightarrow L_i),\mathcal C \cup \set L)$$
        If $\mathcal A(\bigwedge_{i=1}^{n+1}(C_i \rightarrow L_i),\mathcal C \cup \set L)
             = \mathcal C \cup \set L$, the results hold trivially.
        Otherwise, by induction hypothesis, 
        \begin{enumerate}
        \item $\mathcal C \cup \set L \subseteq \mathcal
          A(\bigwedge_{i=1}^{n+1}(C_i \rightarrow L_i),\mathcal C \cup \set L) \subseteq
          \mathcal C \cup \set L \cup \bigcup_{i=1}^n \set{L_i}$;
        \item if $\mathcal C \cup \set L \subseteq \mathcal D \cup \set L$
          then $\mathcal A(\bigwedge_{i=1}^{n+1}(C_i
          \rightarrow L_i),\mathcal C \cup \set L) \subseteq \mathcal
          A(\varphi,\mathcal D \cup \set L)$.
        \end{enumerate}
        It is now simple to show the results. The function $\mathcal A$ is:
        \begin{description}
        \item increasing - $\mathcal C \subseteq \mathcal C \cup \set L
          \subseteq \mathcal C' \subseteq
          \mathcal C \cup \set L \cup \bigcup_{i=1}^n \set{L_i}$; and
        \item monotone - considering $\mathcal C \subseteq \mathcal D$,
          also $\mathcal C \cup \set L \subseteq \mathcal D \cup \set L$,
          and as  $\mathcal D \subseteq \mathcal D \cup \set L$, we
          conclude $\mathcal C' \subseteq \mathcal A(\varphi,\mathcal D)
          \subseteq \mathcal A(\varphi,\mathcal D \cup \set L)$.
        \end{description}
  \end{description}
\end{proof}


\section{Termination and complexity}\label{app:term}

\paragraph{Auxiliary notions.}
Henceforth, let $\varphi = \bigwedge_{i=1}^n(C_i \rightarrow L_i)$,
where $n \geq 1$ be a Horn formula. Thus, each $\opnsf{set}(C_i)$ is a
set of positive literals.
Recall that a Horn formula may be regarded as a set of
clauses.\footnote{Any propositional formula in $\opnsf{CNF}$
  determines univocally a set of sets of literals.}  Whenever $\varphi
\in E_P$ is a Horn form such that $\varphi = \varphi_1 \wedge
\varphi_2$, we may write $\varphi_1 \subseteq \varphi$.  Then, for
$\psi \subseteq \varphi$ and $\mathcal C \subseteq \mathcal C'$, when
we write $\mathcal A(\varphi,\mathcal C) =^k \mathcal A(\psi,\mathcal
C')$, the equality '$=^k$' denotes that the term on the right is
obtained from the term on the left by executing $k$ steps of the
algorithm.

\paragraph{Main result.}
Theorem \ref{thm:term} is in fact a corollary of the following general result.

\begin{theorem}\label{appthm:term}
  For any Horn formula $\varphi = \bigwedge_{i=1}^n(C_i \to L_i)$ and
  any set $\mathcal C$ of literals such that
  $$\set\top \subseteq \mathcal C \subseteq \set\top \cup \bigcup_{i=1}^n L_i$$
  there is a unique set of literals $\mathcal C'$ such that
  $\mathcal A(\varphi,\mathcal C) = \mathcal C'$. Furthermore, the
  function $\mathcal A$ takes at most $n+1$ steps to yield
  $\mathcal C'$, where $n$ is the number of clauses of $\varphi$.
\end{theorem}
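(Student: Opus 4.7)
The plan is to proceed by induction on $n$, the number of clauses of $\varphi$. The recursive definition of $\mathcal{A}$ is tailored to such an argument: every recursive call strictly decreases the number of clauses by removing the selected implication $C_i \to L_i$ from $\varphi$, so the recursion depth is bounded by $n$, with one additional evaluation handling the base case when $\varphi \equiv \top$.

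For the base case $n = 0$, where $\varphi \equiv \top$, the first clause of the definition yields $\mathcal{A}(\varphi, \mathcal{C}) = \mathcal{C}$ in exactly one step, trivially unique and within the bound. For the inductive step with $\varphi$ having $n \geq 1$ clauses, I split on whether some index satisfies $\opnsf{set}(C_i) \subseteq \mathcal{C}$. If none does, then $\mathcal{A}(\varphi, \mathcal{C}) = \mathcal{C}$ in one step, well within $n + 1$. Otherwise the algorithm recurses on $\varphi \setminus (C_i \to L_i)$, which has $n - 1$ clauses, with the augmented set $\mathcal{C} \cup \set{L_i}$. The hypothesis on $\mathcal{C}$ is preserved, since $\mathcal{C} \cup \set{L_i}$ still contains $\top$ and still lies within the literals occurring as consequents. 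Applying the induction hypothesis then gives a unique result in at most $n$ further steps, for a total of at most $n + 1$ steps for $\mathcal{A}(\varphi, \mathcal{C})$.

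The main obstacle is the uniqueness claim, since Definition~\ref{def:HeA} is explicitly non-deterministic: several indices may simultaneously satisfy $\opnsf{set}(C_i) \subseteq \mathcal{C}$ at a given step, and different choices produce syntactically distinct recursive subproblems. To handle this I would prove a small confluence lemma: if indices $i \neq j$ are both applicable at $\mathcal{C}$, then by monotonicity (Lemma~\ref{lem:lfp}.\ref{lem:monot}) firing either clause leaves the other applicable, so both one-step reducts converge in one more step to the configuration with both clauses removed and both $L_i, L_j$ added. Iterating this diamond property aligns any two execution traces to the same outcome. A cleaner alternative, which I would actually prefer, is to characterise $\mathcal{A}(\varphi, \mathcal{C})$ as the least fixed point above $\mathcal{C}$ of the monotone operator $F(\mathcal{D}) = \mathcal{D} \cup \set{L_i \mid \opnsf{set}(C_i) \subseteq \mathcal{D}}$: by the Knaster--Tarski theorem this is uniquely determined by $\varphi$ and $\mathcal{C}$, independently of the order in which applicable clauses are fired, so any terminating execution necessarily agrees with it.
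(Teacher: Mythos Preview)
Your proposal is correct and shares the paper's inductive skeleton: both argue by induction on the number of clauses, observing that each recursive call removes one clause and hence bounds the number of steps by $n+1$. The only structural difference is that you start the induction at $n=0$ (with $\varphi\equiv\top$) whereas the paper takes $n=1$ as base case and analyses a single clause $C\to L$ explicitly; this is cosmetic.

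Where you genuinely go beyond the paper is in the treatment of uniqueness. The paper's proof handles the non-determinism with a phrase of the form ``assume, without loss of generality, that one chooses $\psi$ such that \ldots'' and then invokes the induction hypothesis, without ever arguing that different sequences of choices lead to the same final set. Your confluence (diamond) argument, and the alternative characterisation of $\mathcal A(\varphi,\mathcal C)$ as the least fixed point of $F(\mathcal D)=\mathcal D\cup\{L_i\mid\opnsf{set}(C_i)\subseteq\mathcal D\}$ above $\mathcal C$, actually discharge this obligation. So your route is not merely equivalent but strictly more complete on the point the theorem singles out (``there is a \emph{unique} set''); the paper buys brevity at the cost of leaving that step implicit.
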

\begin{proof}
  We proceed by natural induction on the number of clauses in
  $\varphi$.\pagebreak
  \begin{description}
    \item[Base case:] since $\varphi$ is a single clause; then, either
      $C = \top$ or $C = \bigwedge_{i=1}^n L_i$.
      \begin{enumerate}
      \item Case $\varphi = \top \to L$; therefore, as $\top \in \mathcal
        C$ by hypothesis, it is the case
        that $$\mathcal A(\varphi,\mathcal C) =
        \mathcal A(\top,\mathcal C \cup \set L) = \mathcal C \cup
        \set L$$
      \item Case $\varphi = \bigwedge_{i=1}^n L_i \rightarrow L$; therefore, as $
        \mathcal A(\varphi,\mathcal C) = \mathcal A(\top,\mathcal C') = \mathcal C'$,
        where
        $$\mathcal C' = \left\{
          \begin{array}{ll}
            \mathcal C \cup \set L, & \textrm{if }\set{L_i\ |\
              \textrm{forall }1 \leq i \leq n} \subseteq \mathcal C\\
            \mathcal C, & \textrm{otherwise}
          \end{array}
        \right.
        $$
      \end{enumerate}
      In both cases the algorithm returns the result in two steps: one
      to analyse the clause and affect the resulting set; another to
      finish the execution, using the base case of the inductive
      definition. Notice that as $n=1$, the execution of $\mathcal A$
      takes exactly  $n+1=2$ steps.
    \item[Inductive step:] let $\varphi = \bigwedge_{i=1}^{n+1}(C_i
      \rightarrow L_i)$, where $n \geq 0$;
      notice that each $\set{C_i}$ is either $\set\top$ or a set of
      literals.  Considering $\psi = \bigwedge_{i=1}^n(C_i \rightarrow
      L_i)$, then $\varphi = \psi \wedge (C_{n+1} \to L_{n+1})$.
      Assume, without loss of generality, that one chooses $\psi$ such
      that
      $$\mathcal A(\varphi,\mathcal C) =^k
          \mathcal A(\psi' \wedge (C_{n+1} \to L_{n+1}),\mathcal C') =
          \mathcal C''$$
      where
      \begin{enumerate}
      \item $0 \leq k \leq n$;
      \item $\psi' \subseteq \psi$, \ie, it is a subset of clauses;
      \item $C'' = \left\{
          \begin{array}{ll}
            \mathcal C' \cup \set{L_i}, & \textrm{if }\set{C_i} \subseteq \mathcal C'\\
            \mathcal C', & \textrm{otherwise}
          \end{array}
        \right.$
      \end{enumerate}
      By induction hypothesis $\mathcal C'$ exists. Therefore,
      $\mathcal C''$ exists and is obtained from $\mathcal C'$ in two
      steps. Therefore, the execution of $\mathcal A$ takes $k+2$
      steps and $$k+2 \leq n+2 = (n+1)+1.$$
  \end{description}
\end{proof}


\section{Unique least model}\label{app:leastM}
We present now the proof of Proposition~\ref{prop:leastM}, \ie, the
existance of an Herbrand model.

\medskip
Let $\smb\varphi$ denote the set of propositional symbols of the
formula $\varphi$, inductively defined on the productions generating
the Propositional Language (\cf Definition \ref{def:propL}).
Notice first the following simple fact. 

\begin{lemma}\label{applem:bottop}
  For any Horn formula $\varphi \in E_P$, let $\mathcal
  A(\varphi,\set\top) = \mathcal C$. Then
  $\mathcal C \subseteq \smb\varphi \cup \set{\bot,\top}$.
\end{lemma}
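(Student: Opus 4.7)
The plan is to reduce the statement to the upper-bound half of Lemma~\ref{lem:lfp}.\ref{lem:incr}. Applied with the initial accumulator $\set\top$, that inclusion gives
$$\mathcal C \,=\, \mathcal A(\varphi,\set\top) \,\subseteq\, \set\top \cup \bigcup_{i=1}^n \set{L_i},$$
so the entire task reduces to showing
$$\set\top \cup \bigcup_{i=1}^n \set{L_i} \,\subseteq\, \smb\varphi \cup \set{\bot,\top}.$$

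For this inclusion I would argue clause by clause on the Horn formula $\varphi = \bigwedge_{i=1}^n(C_i \to L_i)$. The element $\top$ lies in $\set{\bot,\top}$ by inspection. For each consequent $L_i$, there are exactly two possibilities, stemming from the conversion of basic Horn clauses via Lemma~\ref{lem:HornClauses}: either $L_i = \bot$ (which corresponds to purely negative basic clauses $\bigvee_j \neg L_{i,j}$), in which case $L_i \in \set{\bot,\top}$ trivially; or $L_i$ is a positive literal, i.e.\ a propositional symbol of $P$ that occurs syntactically in $\varphi$, in which case $L_i \in \smb\varphi$ by the inductive definition of $\smb\cdot$.

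The only potentially delicate point is the book-keeping around $\smb\varphi$, which the paper introduces only informally just above the lemma; I would spell out that $\smb\cdot$ collects precisely the elements of $P$ appearing in a formula, so that any propositional symbol occurring as a consequent of a clause of $\varphi$ is automatically in $\smb\varphi$. Beyond that, the proof is a one-line corollary of the increasing property of $\mathcal A$ with no induction required; indeed, the upper bound in Lemma~\ref{lem:lfp}.\ref{lem:incr} was designed exactly to make this kind of syntactic containment transparent.
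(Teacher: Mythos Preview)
Your proposal is correct and follows essentially the same approach as the paper: the paper's own proof is a one-line remark that the lemma ``follows easily from Lemma~\ref{app:lem:lfp}.\ref{inc} and Theorem~\ref{appthm:term}.'' You spell out exactly the argument behind that remark, using the upper bound from the increasing property and then observing that each consequent $L_i$ is either $\bot$ or a propositional symbol of $\varphi$. The only difference is that the paper also cites the termination theorem, presumably to justify that $\mathcal A(\varphi,\set\top)$ is well-defined; since the lemma statement already presupposes $\mathcal A(\varphi,\set\top)=\mathcal C$, your omission of this citation is harmless.
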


This result follows easily from Lemma~\ref{app:lem:lfp}.\ref{inc} and
Theorem~\ref{appthm:term}.

\begin{proposition}\label{appprop:leastM}
  Consider a satisfiable Horn formula $\varphi \in E_P$ such that
  $\mathcal A(\varphi,\set\top) = \mathcal C$ and
  $\bot \notin\mathcal C$. Then, $V \Vdash \varphi$ considering $V$
  such that $V(p)=1$ for each $p \in \mathcal C$ and $V(q)=0$ for each
  $q \in (P \setminus \mathcal C)$.
\end{proposition}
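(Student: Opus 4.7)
The plan is to verify $V \Vdash C_i \to L_i$ for each clause of $\varphi = \bigwedge_{i=1}^{n}(C_i \to L_i)$ by first extracting from the termination of $\mathcal A$ the following immediate-consequence property of $\mathcal C$:
\begin{center}
$(\star)$\quad for every clause $C_i \to L_i$ of $\varphi$, if $\opnsf{set}(C_i) \subseteq \mathcal C$, then $L_i \in \mathcal C$.
\end{center}

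To prove $(\star)$ I would proceed by induction on the number of recursive steps of $\mathcal A$ in the evaluation of $\mathcal A(\varphi,\set\top) = \mathcal C$. Identifying clauses by their position in $\varphi$, every intermediate configuration is a pair $(\psi,\mathcal D)$ with $\psi \subseteq \varphi$ a residual subset of clauses and $\mathcal D$ a running set of literals that stays within $\mathcal C$ by Lemma~\ref{lem:lfp}.\ref{lem:incr}. The invariant I would carry is: every clause $C_j \to L_j$ of $\varphi$ that has been removed on the way to $(\psi,\mathcal D)$ already has $L_j \in \mathcal D$. The inductive step is immediate from the definition of $\mathcal A$, which inserts $L_j$ in the running set at the very step it discards $C_j \to L_j$; the increasing property (Lemma~\ref{lem:lfp}.\ref{lem:incr}) ensures $L_j$ persists in $\mathcal C$. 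At termination the residual $\psi$ contains no clause with $\opnsf{set}(C_j) \subseteq \mathcal C$, for otherwise $\mathcal A$ would not have stopped; hence any clause violating $(\star)$ must in fact have been removed earlier, which forces $L_j \in \mathcal C$.

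With $(\star)$ in hand, I fix a clause $C_i \to L_i$ and split on whether $V \Vdash C_i$. If $V \not\Vdash C_i$ the implication is satisfied trivially. Otherwise either $C_i = \top$, in which case $\opnsf{set}(C_i) = \set\top \subseteq \mathcal C$ (using $\top \in \mathcal C$, again from Lemma~\ref{lem:lfp}.\ref{lem:incr}), or $C_i = \bigwedge_{j=1}^{k_i} L_{i,j}$ where each positive literal $L_{i,j}$ is made true by $V$ and is therefore a propositional symbol belonging to $\mathcal C$ by the construction of $V$; in both cases $\opnsf{set}(C_i) \subseteq \mathcal C$. Applying $(\star)$ yields $L_i \in \mathcal C$, and the hypothesis $\bot \notin \mathcal C$ rules out $L_i = \bot$, so $L_i$ is a propositional symbol $p$ with $p \in \mathcal C$, whence $V(p) = 1$ and $V \Vdash L_i$.

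The main delicacy I anticipate is the bookkeeping in the proof of $(\star)$: the recursive definition of $\mathcal A$ only refers to the residual formula, not to positions in the original $\varphi$, so one must commit to viewing residuals as subsets of the original clause set (as is already done in Appendix~\ref{app:term}) in order to make the invariant ``every consumed clause has deposited its consequent in the running set'' well-posed. Once that convention is fixed, both the induction for $(\star)$ and the case split verifying each $C_i \to L_i$ are direct unfoldings of the definitions of $\mathcal A$, $V$, and the satisfaction relation.
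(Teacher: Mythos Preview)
Your argument is correct and, in fact, cleaner than the paper's. Both proofs rest on the same semantic fact---that $\mathcal C$ is closed under the immediate-consequence operator of $\varphi$---but you isolate this as the explicit lemma $(\star)$ and prove it once by induction on the run of $\mathcal A$, whereas the paper never states $(\star)$ and instead performs an ad~hoc case analysis: it splits $\varphi$ into the unit clauses $\top\to L_i$ and the remaining clauses $C_j\to L_j$, then for the latter branches on whether $L_j=\bot$ or $L_j\in P$ and on the value $V(L_j)$, appealing to the definition of $\mathcal A$ separately in each branch. Your decomposition (split on $V\Vdash C_i$, then invoke $(\star)$) is shorter and avoids the somewhat circular-looking phrasing in the paper's proof, where sentences begin with ``$V\Vdash\ldots$'' for subformulae whose satisfaction is precisely what is being established. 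A further byproduct of your route is that you never use the hypothesis that $\varphi$ is satisfiable; this is as it should be, since Theorem~\ref{appthm:corr} derives satisfiability \emph{from} this proposition, so that hypothesis is redundant in the statement.
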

\begin{proof}
  Let
  $\mathcal C = \set\top \cup \set{p_i\ |\ 1 \leq i \leq r \textrm{,
      for some }r \geq 0} \subseteq \smb\varphi \cup \set{\bot,\top}$
  (by the previous lemma).  By Proposition~\ref{prop:HornForm},
  consider
  $$\varphi = \bigwedge_{i=1}^n(\top \rightarrow L_i)
        \wedge
        \bigwedge_{j=1}^m(C_j \rightarrow L_j)
  $$
  where each $C_j = \bigwedge_{k=1}^{l_j} L_{k,j}$, for some $l_i$,
  with all literals positive (so, no $L_{k,j}$ is $\top$).

  A valuation $V$ satisfies $\varphi$ if
  $V \Vdash \bigwedge_{i=1}^n(\top \rightarrow L_i)$ and
  $V \Vdash \bigwedge_{j=1}^m(C_j \rightarrow L_j)$. Obviously, for
  all propositional symbols $q \notin \smb\varphi$ we can have
  $V(q)=0$; so we consider below only symbols in
  $\smb\varphi$.
  \begin{description}
  \item[Case]$V \Vdash \bigwedge_{i=1}^n(\top \rightarrow L_i)$.
    Since all $L_i$ are positive literals and none can be $\bot$
    (otherwise $V$ could not satisfy the formula consider herein), all
    must be propositional symbols, say $p_i$. Therefore, by definition
    of $\mathcal A$ and by Lemma~\ref{app:lem:lfp} we have
    $\mathcal C' = \set{p_i\ |\ 1 \leq i \leq n} \subseteq \mathcal
    C$, and since by hypothesis $V \Vdash \top \rightarrow p_i$ for
    all $i \in \set{1,\ldots,n}$, it holds as envisaged that
    $V(p_i)=1$.
  \item[Case]$V \Vdash \bigwedge_{j=1}^m(C_j \rightarrow L_j)$ where
    each $C_j = \bigwedge_{k=1}^{l_j} L_{k,j}$, for some $l_j$, with
    all literals positive. Obviously, for all $j \in \set{1,\ldots,m}$,
    we have that $V \Vdash C_j \rightarrow L_j$ where each $L_j$
    might be either $\bot$ or a propositional symbol. We consider now
    both cases.

    If $L_j=\bot$ then $V \not\Vdash C_j$; so we need to consider two
    cases:
  \begin{enumerate}
  \item either some $L_{k,j}$ is $\bot$; or
  \item for some $k\in\set{1,\ldots,l_j}$ we have $V(L_{k,j})=0$; if
    $L_{k,j} \in \mathcal C$, then in the case that all other literals
    in $C_j$ that are not $\bot$ are also in $\mathcal C$, by
    definition of $\mathcal A$ we would get $\bot \in \mathcal C$,
    what contradicts the hypothesis; therefore,
    $L_{k,j} \notin \mathcal C$.
  \end{enumerate}
  Note that we can consider any valuation for the remaining literals
  which are propositional symbols.

  If $L_j$ is a propositional symbol (say $p$), then again we need to
  consider two cases.
  \begin{enumerate}
  \item either $V(p)=0$ and thus $V \not\Vdash C_j$, and we proceed as
    above; or
  \item $V(p)=1$ and thus $V \Vdash C_j$, \ie, for all
    $k\in\set{1,\ldots,l_j}$ we have $V(L_{k,j})=1$, with, by
    definition of the $\mathcal A$, all $L_{k,j}$ and $p$ in $\mathcal C$.
  \end{enumerate}
  \end{description}
\end{proof}

To prove an equivalent formulation of this result --- for a
satisfiable Horn formula $\varphi$ such that
$\mathcal A(\varphi,\set\top) = \mathcal C$, any $p \in \mathcal C$ if
and only if its valuation is 1 --- one might proceed axiomatically,
using the following rule.
$$(\bigwedge_{i=1}^n(\top \rightarrow p_i) \wedge
(\bigwedge_{i=1}^np_i \to p)) \to (\top \to p)$$ The satisfaction of
$\varphi$ implies the satisfaction of the formula above and thus, by
definition of the function $\mathcal A$ and of the satisfaction
relation (\cf Definition \ref{def:sat}), one easily concludes that
$p \in \mathcal C$ and its valuation is 1


\section{Correctness}\label{app:corr}

\subsection{Soundness}

\begin{lemma}\label{lem:topbot}
  Let $\varphi \in E_P$ be in Horn formula such that $\varphi = \psi \wedge
  (C \to \bot)$ and $\bot \in \mathcal A(\varphi,\set\top)$. Then,
  $\set\varphi \models (\top \to \bot)$, being thus contradictory.
\end{lemma}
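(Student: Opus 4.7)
The plan is to argue contrapositively: I show that any valuation $V$ satisfying $\varphi$ would have to satisfy the literal $\bot$ as well, which is impossible; hence no model of $\varphi$ exists, which is precisely $\set\varphi \models \top \to \bot$.

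The key step is the following invariant, proved by natural induction on the number $k$ of recursive steps used by $\mathcal A$ to compute $\mathcal A(\varphi,\set\top)$: \emph{if $L$ belongs to the accumulated set $\mathcal C_k$ obtained after $k$ steps, then $V(L)=1$ for every valuation $V \Vdash \varphi$}. The base case ($k=0$, with $\mathcal C_0 = \set\top$) holds since $\fintv\top = 1$ by Definition~\ref{def:interp}. For the inductive step, the $(k{+}1)$-th step of $\mathcal A$ selects some clause $C_j \to L_j$ of $\varphi$ with $\opnsf{set}(C_j) \subseteq \mathcal C_k$ and extends the accumulated set to $\mathcal C_k \cup \set{L_j}$. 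By the induction hypothesis every literal of $\opnsf{set}(C_j)$ is assigned $1$ by $V$, so $V \Vdash C_j$; since $V \Vdash C_j \to L_j$ (this clause being a conjunct of $\varphi$), we obtain $V(L_j) = 1$, preserving the invariant.

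To conclude, apply the invariant to $\bot$. From $\bot \in \mathcal A(\varphi,\set\top)$ and $\bot \notin \set\top$ we infer that $\bot$ entered the accumulated set at some step, necessarily through a clause whose consequent is $\bot$ (such as the given $C \to \bot$, or some analogous clause inside $\psi$). The invariant then forces $V(\bot) = 1$ for every $V \Vdash \varphi$, which contradicts $\fintv\bot = 0$. Hence no model of $\varphi$ exists, so $\set\varphi \models \top \to \bot$ and $\varphi$ is contradictory.

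The main obstacle is cleanly parameterising the induction in the presence of non-determinism: different executions of $\mathcal A$ may select clauses in different orders, so the induction must follow a \emph{specific} terminating execution that witnesses $\bot \in \mathcal A(\varphi,\set\top)$, rather than relying on an intrinsic ordering of literals. An alternative route would be to characterise $\mathcal A(\varphi,\set\top)$ as the least fixed point of the associated operator (via Lemma~\ref{lem:lfp}) and observe that $\set{L \mid V(L)=1} \cup \set\top$ is a pre-fixed point whenever $V \Vdash \varphi$; this bypasses the non-determinism but demands setting up the fixed-point machinery explicitly, so the step-indexed induction above is the more direct path.
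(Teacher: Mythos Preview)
Your proof is correct but follows a genuinely different route from the paper's. The paper gives only a one-line sketch invoking the transitivity law $(\alpha \to \beta) \wedge (\beta \to \gamma) \models \alpha \to \gamma$: since $\bot$ enters $\mathcal A(\varphi,\set\top)$ only by following a chain of clauses whose first antecedent is $\top$ and whose last consequent is $\bot$, repeated use of transitivity (together with conjunction laws to handle multi-literal antecedents) yields $\set\varphi \models \top \to \bot$. You instead argue semantically, proving by step-indexed induction that every literal in the accumulated set is satisfied by any model of $\varphi$, and then instantiating with $\bot$. Your approach handles clauses $C_j \to L_j$ with a genuine conjunction $C_j$ uniformly---the paper's transitivity law, as quoted, is for single-premise implications and tacitly needs a generalisation---and it makes the non-determinism of $\mathcal A$ explicit by fixing one terminating run. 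The paper's approach, in turn, is shorter and stays at the level of propositional consequence, viewing the algorithm's trace as a derivation to be collapsed syntactically. The fixed-point alternative you sketch is essentially your same invariant phrased order-independently and is also sound.
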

\begin{proof}
  Using laws of Propositional Logic (in particular $(\varphi
  \rightarrow \psi) \wedge (\psi \rightarrow \gamma) \models \varphi
  \to \gamma$, one easily shows the result.
\end{proof}


\begin{theorem}\label{appthm:corr})
  For any Horn formula $\varphi \in E_P$:
  \begin{itemize}
  \item $\bot \notin \mathcal A(\varphi,\set\top)$, only if $\varphi$ it is satisfiable;
  \item $\bot \in \mathcal A(\varphi,\set\top)$, only if $\varphi$ it is contradictory.
  \end{itemize}
\end{theorem}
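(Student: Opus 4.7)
My plan is to dispatch the two bullets separately, since the first is essentially a direct corollary of the least-model result while the second requires an inductive argument tracking what the algorithm produces.

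\paragraph{First bullet (satisfiability).}
For the implication ``if $\bot \notin \mathcal A(\varphi,\set\top)$ then $\varphi$ is satisfiable'' I would simply invoke Proposition~\ref{appprop:leastM}. Let $\mathcal C = \mathcal A(\varphi,\set\top)$; by hypothesis $\bot \notin \mathcal C$, so define the valuation $V$ with $V(p)=1$ for $p \in \mathcal C$ and $V(q)=0$ for $q \in P \setminus \mathcal C$. The proposition gives $V \Vdash \varphi$ directly, and $\varphi$ is satisfiable by definition. No further work is needed here.

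\paragraph{Second bullet (contradiction).}
The natural strategy is to prove the stronger invariant that every literal in $\mathcal A(\varphi,\set\top)$ is a semantic consequence of $\varphi$, \ie\ $\set\varphi \models L$ for each $L \in \mathcal A(\varphi,\set\top)$. I would proceed by induction on the number of recursive steps of $\mathcal A$, which Theorem~\ref{appthm:term} guarantees is finite (at most $n+1$). In the base case the accumulator is $\set\top$, and trivially $\set\varphi \models \top$. For the inductive step, suppose the accumulator $\mathcal C_k$ after $k$ steps satisfies $\set\varphi \models L'$ for each $L' \in \mathcal C_k$, and that the $(k{+}1)$-th step fires on a clause $(C_i \to L_i)$ of $\varphi$ with $\opnsf{set}(C_i) \subseteq \mathcal C_k$, yielding $\mathcal C_{k+1} = \mathcal C_k \cup \set{L_i}$. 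By the induction hypothesis, every literal of $C_i$ is entailed by $\varphi$, hence so is $C_i$ (being their conjunction, or $\top$); since $(C_i \to L_i)$ is a conjunct of $\varphi$, modus ponens at the semantic level yields $\set\varphi \models L_i$, preserving the invariant.

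\paragraph{Conclusion and obstacle.}
Applying the invariant with $L = \bot$: if $\bot \in \mathcal A(\varphi,\set\top)$ then $\set\varphi \models \bot$, which is the contrapositive of satisfiability and so $\varphi$ is contradictory (alternatively, one can finish by quoting Lemma~\ref{lem:topbot}, which packages precisely this $\bot$-propagation argument for the last clause used). The only delicate point I foresee is making the induction on the ``number of steps'' rigorous without re-opening the recursive definition; I would fix notation by writing each step explicitly as $\mathcal A(\varphi,\mathcal C_k) =^1 \mathcal A(\varphi \setminus (C_i \to L_i), \mathcal C_k \cup \set{L_i})$ as introduced in Appendix~\ref{app:term}, and observe that the residual formula $\varphi \setminus (C_i \to L_i)$ remains entailed by $\varphi$, so the induction can equivalently be read as an induction on the number of clauses of the residual formula, which is the cleanest form.
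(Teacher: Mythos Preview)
Your proposal is correct. The first bullet is handled exactly as the paper does: both of you simply invoke Proposition~\ref{appprop:leastM} to exhibit a satisfying valuation when $\bot\notin\mathcal A(\varphi,\set\top)$.

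For the second bullet you take a slightly different route from the paper. The paper first uses the contrapositive of Corollary~\ref{cor:opt1} to locate a clause of the form $C_i\to\bot$ in $\varphi$, isolates the execution step at which $\bot$ enters the accumulator, and then appeals to Lemma~\ref{lem:topbot} (whose proof is only sketched via transitivity of implication) to conclude $\set\varphi\models\top\to\bot$. You instead establish the uniform invariant ``$\set\varphi\models L$ for every $L$ in the current accumulator'' by induction on the number of recursive steps, and specialise it to $L=\bot$ at the end. The underlying semantic content is the same modus-ponens chain, but your formulation is more explicit and does not need to single out the $\bot$-producing clause or rely on the somewhat informal Lemma~\ref{lem:topbot}; conversely, the paper's decomposition makes visible \emph{which} clause is responsible for the contradiction, which your invariant does not track. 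Your remark that the fired clause at each step is still a conjunct of the original $\varphi$ (since the residual formula is a sub-conjunction) is exactly the observation needed to keep the invariant phrased in terms of $\varphi$ rather than the shrinking residual, and it is sound.
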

\begin{proof}
  The first statement is a consequence of Proposition \ref{prop:leastM}.
  We prove the second statement. Since by hypothesis, $\bot \in
  \mathcal A(\varphi,\set\top)$, the contra-positive of Lemma
  \ref{cor:opt1} ensures that either $\varphi = \top \to \bot$ or
  there is a Horn formula $\gamma$ such that $\varphi = \gamma \wedge
  (C_i \to \bot)$, for some $1 \leq i \leq n$. The case $\varphi =
  \top \to \bot$ yields immediatly the result, as $\varphi \equiv
  \bot$. Let us then consider the other case.\\[.5em]

  \noindent Let $\mathcal A(\varphi,\set\top) =^k \mathcal
  A(\psi,\mathcal C)$ with:
  \begin{enumerate}
  \item $0 \leq k < n$;
  \item $\top \in \mathcal C$ (by Lemma \ref{lem:incr}) and $\bot
    \notin \mathcal C$;
  \item $\varphi = \varphi' \wedge \psi$ and either $\psi = \top \to
    \bot$ or $\psi = \psi' \wedge (C_i \to \bot)$, for some Horn form
    $\psi'$.
  \end{enumerate}
  Assume, without loss of generality, that $\set{C_i} \subseteq
  \mathcal C$; then, by Lemma \ref{lem:incr},
  $$\mathcal A(\psi,\mathcal C) = \mathcal A(\psi',\mathcal C \cup
  \set\bot) \subseteq \mathcal C \cup \set\bot$$ Since $\varphi =
  \varphi' \wedge \psi' \wedge (C_i \to \bot)$, by Lemma
  \ref{lem:topbot} we conclude that $\varphi$ it is contradictory.
\end{proof}

\subsection{Completeness}

\begin{theorem}\label{appthm:compl}
  For any Horn formula $\varphi \in E_P$:
  \begin{itemize}
  \item $\bot \notin \mathcal A(\varphi,\set\top)$, if $\varphi$ it is satisfiable;
  \item $\bot \in \mathcal A(\varphi,\set\top)$, if $\varphi$ it is contradictory.
  \end{itemize} 
\end{theorem}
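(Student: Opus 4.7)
The plan is to deduce completeness directly as the contrapositive of soundness (Theorem~\ref{appthm:corr}), using the elementary dichotomy that every propositional formula is either satisfiable or contradictory. Concretely, unfolding Definition~\ref{def:term}, ``$\varphi$ is not satisfiable'' reads ``for all $V$, $V\not\Vdash\varphi$'', which is exactly ``$\varphi$ is contradictory''; hence the predicates ``satisfiable'' and ``contradictory'' partition the set of Horn formul\ae.

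Given this, both clauses of Theorem~\ref{appthm:compl} follow immediately. For the first, assume $\varphi$ is satisfiable; if we had $\bot\in\mathcal A(\varphi,\set\top)$, the second clause of Theorem~\ref{appthm:corr} would force $\varphi$ to be contradictory, contradicting the hypothesis. For the second, assume $\varphi$ is contradictory; if we had $\bot\notin\mathcal A(\varphi,\set\top)$, the first clause of Theorem~\ref{appthm:corr} (whose substance is Proposition~\ref{prop:leastM}, which actually \emph{exhibits} a satisfying valuation built from $\mathcal A(\varphi,\set\top)$) would force $\varphi$ to be satisfiable, again a contradiction.

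An alternative, more self-contained route for the satisfiable case, which I would sketch as a remark, is a direct invariant argument bypassing the appeal to soundness: fix any $V\Vdash\varphi$ and show by induction on the number of recursive calls of $\mathcal A$ that every intermediate set $\mathcal C$ produced along the run starting from $\set\top$ satisfies $V(L)=1$ for every $L\in\mathcal C$. The base case $\mathcal C=\set\top$ is immediate; in the inductive step, $\mathcal A$ extends $\mathcal C$ by some $L_i$ only when $\opnsf{set}(C_i)\subseteq\mathcal C$, so the induction hypothesis yields $V\Vdash C_i$ and hence, from $V\Vdash C_i\to L_i$, also $V(L_i)=1$; in particular $L_i\neq\bot$, so $\bot$ is never added. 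The contradictory case then still reduces to the soundness direction via the dichotomy above.

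The main obstacle is purely conceptual rather than technical: the real work has already been absorbed into Proposition~\ref{prop:leastM} and Theorem~\ref{appthm:corr}, so the principal risk is notational, namely pairing each of the two completeness clauses with the correct soundness clause and invoking the satisfiable/contradictory dichotomy in the right direction.
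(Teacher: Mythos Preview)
Your proposal is correct and matches the paper's own proof exactly: each completeness clause is obtained as the contrapositive of the \emph{other} soundness clause from Theorem~\ref{appthm:corr}, using the satisfiable/contradictory dichotomy that you spell out explicitly. The additional invariant argument you sketch as a remark is a nice self-contained alternative that the paper does not give, but the main line of your argument is identical to the paper's.
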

\begin{proof}
  The first statement is the contra-positive of the second statement of
  the previous theorem. The second is the contra-positive of the
  first statement of the previous theorem.
\end{proof}


\end{document}